\theoremstyle{plain}
\newcounter{mycounter}
\title{Polynomial Time Learning-Augmented Algorithms for NP-hard Permutation Problems}
\author{Evripidis Bampis}{Sorbonne Universit\'e, CNRS, LIP6, F-75005 Paris, France}{evripidis.bampis@lip6.fr}{}{}
\author{Bruno Escoffier}{Sorbonne Universit\'e, CNRS, LIP6, F-75005 Paris, France}{bruno.escoffier@lip6.fr}{}{}
\author{Dimitris Fotakis}{National Technical University of Athens, Greece\\ Archimedes Research Unit, Athena RC, Greece}{fotakis@cs.ntua.gr}{}{}
\author{Panagiotis Patsilinakos}{Universit\'e Paris-Dauphine, Universit\'e PSL, CNRS, LAMSADE, 75016, Paris, France}{patsilinakos@gmail.com}{}{}
\author{Michalis Xefteris}{Sorbonne Universit\'e, CNRS, LIP6, F-75005 Paris, France}{mxefteris@hotmail.com}{}{}
\authorrunning{E. Bampis, B. Escoffier, D. Fotakis, P. Patsilinakos and M. Xefteris} 
\keywords{Learning-Augmented Algorithms, Algorithms with predictions, Permutation problems}
\begin{document}

\maketitle

\begin{abstract}
We consider a learning-augmented framework for NP-hard permutation problems. The algorithm has access to predictions telling, given a pair $u,v$ of elements, whether $u$ is before $v$ or not in an optimal solution. Building on the work of Braverman and Mossel (SODA 2008), we show that for a class of optimization problems  including scheduling, network design and other graph permutation problems, these predictions allow to solve them in polynomial time with high probability, provided that predictions are true with probability at least $1/2+\epsilon$. Moreover, this can be achieved with a parsimonious access to the predictions. 
\end{abstract}

\section{Introduction}

In recent years, advancements in Machine Learning (ML) have significantly influenced progress in solving optimization problems across a wide range of fields. By leveraging historical data, ML predictors are utilized every day to tackle numerous challenges. These developments have motivated researchers in algorithms to incorporate ML predictions into algorithm design for optimization problems. This has given rise to the vastly growing field of \emph{learning-augmented algorithms}, also known as \emph{algorithms with predictions}. In this framework, it is assumed that predictions about a problem's input are provided by a black-box ML model. The objective is to use these predictions to develop algorithms that outperform existing ones when the predictions are sufficiently accurate.

The idea of learning-augmented algorithms was initially introduced by Mahdian,
Nazerzadeh, and Saberi, who applied it to the problem of allocating online advertisement space for budget-constrained advertisers~\cite{la_2}. Later, Lykouris and Vassilvitskii formalized the framework, studying the online caching problem using predictions~\cite{lykouris}. The main emphasis in the field of algorithms with predictions has been on online optimization, as predicting the future of a partially unknown input instance is a natural approach. However, in the past few years, the field has expanded into various other areas. An almost complete list of papers in the field can be found in~\cite{website}.

More relevant to this work, algorithms with predictions have been used to address NP-hard optimization problems and overcome their computational challenges. The first learning-augmented algorithms applied to NP-hard problems were focused on clustering, as seen in~\cite{gamlath, clustering, clustering_2}. Moreover, several papers have studied \textsc{MaxCut} with predictions, including~\cite{approximations_with_predictions, cohenaddad, constraint, maxcut_4}. Cohen-Addad et al. investigated the approximability of \textsc{MaxCut} with predictions in two models~\cite{cohenaddad}. In the first model, similar to the one used in this work, they assumed predictions for each vertex (on its position in an optimal cut) that are correct with probability $1/2+\epsilon$, and presented a polynomial-time $(0.878 + \Tilde{\Omega}(\epsilon^4))$-approximation algorithm. In the second model, they receive a correct prediction for each vertex with probability $\epsilon$ (and no information otherwise) and designed a $(0.858 + \Omega(\epsilon))$-approximation algorithm. Ghoshal et al. also studied \textsc{MaxCut} and \textsc{Max2-Lin} in both models~\cite{constraint}. Furthermore, in~\cite{braverman} they studied Maximum Independent Set within the framework of learning-augmented algorithms, adopting the aforementioned first model.
Finally, in~\cite{antoniadis} they studied approximation algorithms with predictions for several NP-hard optimization problems within a prediction model different from the one used in this work. 

In this paper, we design learning-augmented algorithms for NP-hard optimization problems. Our approach does not use all available predictions for the problem at hand but instead utilizes the predictor selectively. This aligns with the concept of parsimonious algorithms, introduced in~\cite{kumar}, which aim to limit the number of predictions used, assuming that obtaining additional predictions can be computationally expensive. Here, we consider problems whose feasible solutions can be represented as permutations. There are $n$ input elements for an optimization problem denoted by $a_1, \dots, a_n$. A permutation (ordering) $\sigma$ corresponds to the solution $(a_{\sigma(1)},\dots,a_{\sigma(n)})$.

Regarding the prediction model, we adopt the following probabilistic framework. For each pair $i, j$ we can get a prediction query $q(a_i, a_j)$ that denotes whether $a_i$ precedes $a_j$ or not in a fixed optimal solution (permutation). Each prediction is independently correct with probability at least $1/2+\epsilon$, for $\epsilon>0$. An algorithm has access to $n \choose 2$ predictions. A formal description of the model is given in Section~\ref{sec:prelim}. 

In this work, we use these prediction queries to solve NP-hard optimization problems exactly with high probability. We design a novel framework which is capable of handling permutation optimization problems that exhibit one of two key properties: the \emph{decomposition} property and the \emph{$c$-locality} property in their objective function (see Section~\ref{sec:dec} for formal definitions).

The decomposition property states that solving a subproblem $\mathcal{I}(i,j)$ (between positions $i$ and $j$ in the permutation) of the optimization problem at hand optimally depends only on the set of elements in positions in $[i,j]$, the permutation $\sigma(i,j)$ of these elements in $[i,j]$, and the set of elements to the left of $i$ and to the right of $j$, but not on their order. On the other hand, the $c$-locality property states that the cost function of the problem depends only locally (with respect to the permutation) on pairs of distinct elements.

More specifically, we adjust and extend the approach of Braverman and Mossel~\cite{braverman2, braverman1} for the problem of sorting from noisy information and give the following theorem for a family of optimization problems (see Section~\ref{sec:proof} for its proof).

\begin{restatable}{theorem}{MainTheorem}
\label{theo:main}
If the objective function of a permutation optimization problem $P$ either exhibits the decomposition property or is $c$-local, then $P$ can be solved exactly with high probability in polynomial time, using $O(n \log n)$ prediction queries .
\end{restatable}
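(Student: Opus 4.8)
The plan is to reduce the optimization problem to a noisy-sorting problem followed by a local dynamic program. First I would invoke the noisy-sorting machinery of Braverman and Mossel: treating each prediction query $q(a_i,a_j)$ as a comparison that is correct independently with probability at least $1/2+\epsilon$, I would compute from $O(n\log n)$ queries a permutation $\hat\sigma$ that, with high probability, has small maximum dislocation from the hidden optimal permutation $\sigma^*$, i.e. $\max_j |\hat\sigma^{-1}(j)-(\sigma^*)^{-1}(j)| = O(\log n)$. The point of asking only for bounded dislocation rather than exact recovery is that with constant noise and no resampling one cannot hope to pin down the relative order of nearby elements from comparisons alone; the optimization objective will later break these ties.

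The second step is a dynamic program that searches over all permutations lying within dislocation $w=O(\log n)$ of $\hat\sigma$ and returns the one of minimum cost. I would process positions $1,\dots,n$ left to right; when filling position $k$, the element placed there must be one of the $O(w)$ elements whose $\hat\sigma$-position lies in $[k-w,k+w]$, and the DP state records which elements inside the current window of width $O(w)$ have already been placed. Since all elements to the left of the window are necessarily used and all to the right are necessarily free, the state is a subset of an $O(\log n)$-element window, so there are $2^{O(\log n)}=\mathrm{poly}(n)$ states and the DP runs in polynomial time.

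Correctness of the DP relies on the two structural hypotheses. For a $c$-local objective the total cost is a sum of contributions of pairs at permutation-distance at most $c$; each such contribution is finalized exactly when the later element of the pair is placed, at which moment the relevant earlier elements all lie in the current window (taking its width to be $O(\log n + c)$), so the transition can charge the correct incremental cost from the state alone. For a decomposition-type objective, the hypothesis guarantees that the optimal completion of a partial solution depends only on the set of placed elements and the set remaining, not on the order of the remaining elements; hence the DP need only carry the window state, and the subproblem costs compose cleanly along the positions. Because $\sigma^*$ lies in the searched neighborhood with high probability and is globally optimal, the cost-minimizing permutation found by the DP equals $\mathrm{OPT}$, and we recover an optimal solution exactly with high probability.

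The main obstacle I anticipate is the first step: obtaining the $O(\log n)$ maximum-dislocation guarantee while spending only $O(n\log n)$ queries and without resampling. The clean Braverman--Mossel bound is stated for the maximum-likelihood ordering computed from all $\binom{n}{2}$ comparisons, so I would need either an adaptive querying scheme (a coarse sort, then local refinement of each element against $O(\log n)$ neighbors) or an argument that a suitably chosen sparse set of $O(n\log n)$ comparisons already forces bounded dislocation of the induced ranking with high probability; controlling the tail of the dislocation under this sparsification, via Chernoff bounds over the $O(\log n)$-width windows, is the delicate part. A secondary point to verify is that the window width $O(\log n)$ is compatible with the $c$-locality and decomposition accounting, i.e. that no cost term ever couples two elements separated by more than the window, which is precisely what the two hypotheses guarantee.
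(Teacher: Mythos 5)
Your two-phase plan coincides with the paper's: first invoke the Braverman--Mossel machinery to obtain, from $O(n\log n)$ queries, a permutation within maximum dislocation $O(\log n)$ of the hidden optimum $\sigma^*$ with high probability, then solve the resulting ``$k$-position enhanced'' problem with $k=O(\log n)$ by a band-restricted dynamic program (this is exactly Lemma~\ref{lemma:polytime} combined with Lemmas~\ref{lem:presort_decomp} and~\ref{lem:presort_local}). The obstacle you flag in the first step is in fact already resolved by the cited result: Theorem~\ref{theorem:mainbraverman} states that the Braverman--Mossel algorithm asks only $O(n\log n)$ queries, and Theorem~\ref{theo:log} gives the $O(\log n)$ dislocation bound for the $s$-optimal order it outputs, so the paper uses both as black boxes and no sparsification argument is needed on your side. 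There is, however, one genuine flaw in your $c$-local DP: the state you define --- the subset of the $O(\log n)$-width window already placed --- does not determine the transition cost. The term $cost_I(a_{\sigma(i-c)},\dots,a_{\sigma(i)})$ incurred when position $i$ is filled depends on \emph{which element occupies each} of the positions $i-c,\dots,i-1$, and a set carries no such order information; your claim that ``the transition can charge the correct incremental cost from the state alone'' is false as stated. The fix is exactly the paper's: augment the state with the ordered tuple of the elements in the last $c$ positions, as in the recurrence of Lemma~\ref{lem:presort_local}, whose state is $(S(i),a_{\sigma(i+1)},\dots,a_{\sigma(i+c)})$; this multiplies the state space by only $O(k^{c})=\mathrm{poly}(n)$ for constant $c$, so polynomiality survives.

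For decomposable problems you diverge from the paper in a way worth noting. The paper runs a divide-and-conquer \emph{interval} DP generalizing Braverman--Mossel's: it solves $\mathcal{I}(i,j)$ from $\mathcal{I}(i,s)$ and $\mathcal{I}(s+1,j)$, guessing the boundary sets $I'_L$, $I'$, $I'_R$ at a cost of $2^{O(k)}$ per subproblem. You instead run a left-to-right \emph{prefix} DP indexed by the set of placed elements, and this is sound: applying Definition~\ref{def:decomp} with the degenerate split $i=1$, $s=|S|-1$, $j=|S|$ shows that the cost of appending one element depends only on the prefix set, the appended element, and the complement set (through $g$ on a single position and $h$, both set-valued in the relevant arguments), so optimal substructure over subsets holds, and the dislocation constraint keeps the number of reachable subsets at $2^{O(k)}$ per position; base cases need only that $g$ and $h$ are evaluable, the same assumption the paper makes when solving its size-one leaves. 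Your route is arguably simpler and unifies the two structural cases under one DP skeleton, while the paper's interval DP uses the decomposition property exactly as stated (arbitrary split points $s$) and stays closer to the original Braverman--Mossel algorithm it extends. With the $c$-local state repaired as above, your argument establishes Theorem~\ref{theo:main}.
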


Therefore, if a permutation optimization problem is either decomposable or $c$-local, it can be solved with high probability in polynomial time within our prediction-based framework. To illustrate these properties, we examine several example problems: Maximum Acyclic Subgraph, Minimum Linear Arrangement, a scheduling problem as examples of decomposable problems, the Traveling Salesperson Problem (TSP) and and social welfare maximization in keyword auctions with externalities and window size as representatives of $c$-local problems. All these are well-known NP-hard problems and cannot be solved exactly in polynomial time without predictions unless $P=NP$. Moreover, the framework can naturally be extended to address a variety of other NP-hard problems with similar structural properties.

Another important aspect of our framework is that it does not query all possible pairs but instead makes only $O(n \log n)$ queries, making it parsimonious with respect to the number of predictions used.

The proof of Theorem~\ref{theo:main} demonstrates that, for the permutation problems under consideration, knowing an approximation of each $\sigma^*(i)$ (in an optimal solution $\sigma^*$) within an additive $O(\log n)$ bound is sufficient to solve the problem in polynomial time. In Section~\ref{sec:hardness}, we first show that this $O(\log n)$ approximation is not always sufficient for polynomial time solvability. Finally, we prove that the $O(\log n)$ bound is tight, as there are decomposable and $c$-local problems where an additive approximation of $f(n)\log n$ is not enough to solve these problems in polynomial time, for any unbounded function $f$.

\section{Background and Overview}
\label{sec:prelim}

\subsection{Definitions}

Formally, we consider the following probabilistic prediction model, which is inspired by the noisy query model studied in~\cite{braverman2}.

\begin{definition} \label{def:pred}
Let $A=\{a_1,\dots,a_n\}$ and $\sigma^*$ be a permutation from $[1,n]$ to $[1,n]$. For each pair $(a_\ell,a_t)$ in $A\choose 2$ the result of a prediction query for $(a_\ell,a_t)$, with respect to $\sigma^*$, is $q(a_\ell,a_t) \in \{-1,1\}$ where  $q(a_\ell,a_t) = -q(a_t,a_\ell)$. We assume that:
\begin{itemize}
    \item for each $1\leq i<j\leq n$ the probability that
$q(a_{\sigma^*(i)},a_{\sigma^*(j)}) = 1$ is at least $ \frac{1}{2} + \epsilon$, $0<\epsilon<1/2$,
    \item the queries 
$
\left\{ q(a_\ell,a_t) : 1 \leq \ell < t \leq n \right\}
$
are independent conditioned on $\sigma^*$. 
\end{itemize} 

\end{definition}
In this definition, the query $q(a_\ell,a_t)$ asks whether $a_\ell$ precedes $a_t$ in $(a_{\sigma^*(1)},\dots,a_{\sigma^*(n)})$ or not. The first item states that the prediction is correct with probability at least $1/2+\epsilon$.

Given the prediction queries, we are interested in finding a permutation that maximizes the number of agreements with the queries. Formally:

\begin{definition}
Given $n \choose 2$ prediction queries $q(a_\ell,a_t)$, the score $s_q(\pi)$ 
of a permutation $\pi:[1,n]\rightarrow [1,n]$ is given by
\begin{equation} 
 \label{eq:score}
    s_q(\pi) = \sum_{i< j} q(a_{\pi(i)},a_{\pi(j)}).
\end{equation}
We say that a permutation $\pi^*$ is s-optimal if $\pi^*$ 
is a maximizer of~(\ref{eq:score}) among all permutations. 
\end{definition}

The {\em Noisy Sorting Without Resampling (NSWR)} 
problem, defined in~\cite{braverman2}, is the problem of finding 
an $s$-optimal permutation $\pi^*$ with respect to a (hidden) permutation $\sigma^*$ assuming that $q$ satisfies Definition~\ref{def:pred} with $p=1/2+\epsilon, \epsilon>0$. 


As mentioned in the introduction, we consider in this work permutation problems, i.e.,  optimization problems whose solutions of an instance $I$ are permutations of $n$ elements of a set $A$ of $I$ (vertices or edges in a graph, jobs in a scheduling problem,\dots). So the goal is to maximize or minimize $f_I(\sigma)$ for $\sigma:[1,n]\rightarrow [1,n]$. Here, $a_{\sigma(i)}\in A$ is the element of $A$ that is in position $i$ in the permutation (i.e., the permutation is $(a_{\sigma(1)},a_{\sigma(2)},\dots,a_{\sigma(n)})$). To deal with feasibility constraints, $f_I(\sigma)=\infty$ if $\sigma$ is unfeasible (for a minimization problem, $-\infty$ for a maximization problem).

A core part of our work will focus on permutation problems for which we have an additional information, which is an approximation of $\sigma^*(i)$ for an optimal solution $\sigma^*$. We formalize this in the following definition. 

\begin{definition}
    Given a permutation problem $P$ and a permutation $(a_1,a_2,\dots,a_n)$, $P$ is {\it $k$-position enhanced} if we know that there exists an optimal solution $\sigma^*$ such that for all $i$, $|\sigma^*(i)-i|\leq k$.
\end{definition}

\subsection{Framework Overview}

The NSWR problem has been introduced and studied in~\cite{braverman2}. They showed the following result.

\begin{theorem}{\cite{braverman2}}\label{theorem:mainbraverman}
There exists a randomized algorithm that for any $\alpha > 0$ finds an optimal solution
to NSWR with $p=1/2+\epsilon, \epsilon>0$ in time $n^{O((\alpha+1)\epsilon^{-4})}$ except with probability $n^{-\alpha}$. Moreover, the algorithm asks $O(n\log n)$ queries.
\end{theorem}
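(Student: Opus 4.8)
The plan is to establish the theorem in three stages: a structural \emph{displacement} bound showing that every $s$-optimal permutation lies close to the hidden order $\sigma^*$, a query-efficient \emph{approximate sorting} routine that pins down $\sigma^*$ up to a logarithmic window, and finally an exact \emph{dynamic program} confined to that window. Throughout I would relabel the elements so that $\sigma^* = \mathrm{id}$; this is only for the analysis, since the algorithm never sees $\sigma^*$. For a pair with true ranks $i<j$ I write $q(i,j)$ for the query bit, so that $\mathbb{E}[q(i,j)] \geq 2\epsilon$ by Definition~\ref{def:pred}, and the queries are mutually independent.

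First I would prove the displacement lemma: with probability at least $1 - n^{-(\alpha+1)}$, every $s$-optimal $\pi^*$ satisfies $|\pi^*(i) - i| \leq W$ for all $i$, where $W = O((\alpha+1)\epsilon^{-2}\log n)$. The starting point is that for any competing permutation $\pi$, comparing its score with that of the identity gives $s_q(\mathrm{id}) - s_q(\pi) = 2\sum_{\{i,j\}\in \mathrm{Inv}(\pi)} q(i,j)$, where $\mathrm{Inv}(\pi)$ is the set of pairs inverted by $\pi$. Since each summand has mean at least $2\epsilon$, this difference has mean at least $4\epsilon m$ when $|\mathrm{Inv}(\pi)|=m$, and a Hoeffding bound shows it becomes negative only with probability $\exp(-\Omega(\epsilon^2 m))$. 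The delicate point -- and the first place where care is needed -- is that one cannot simply union bound over all inversion sets, since their number is far too large; instead I would argue that a permutation with some element displaced by more than $W$ necessarily inverts a structured family of $\Omega(W)$ pairs whose controlled entropy lets the union bound close once $W \gtrsim (\alpha+1)\epsilon^{-2}\log n$. Consequently no such $\pi$ can beat the identity, so every maximizer of~\eqref{eq:score} stays within displacement $W$ of $\sigma^*$.

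The main obstacle is the second stage: producing, with only $O(n\log n)$ queries, an ordering $\tau$ whose maximum dislocation from $\sigma^*$ is also $O((\alpha+1)\epsilon^{-2}\log n)$ with high probability. A single round of ``win counting'' -- comparing each element against $O(\log n)$ others and ranking by the number of wins -- only localizes ranks to within $\tilde{O}(\sqrt{n})$, because the number of wins concentrates around a linear function of the true rank with slope $\Theta(\epsilon)$ but with fluctuations of order $\sqrt{\text{(sample size)}}$. I would therefore use a multi-scale refinement: the coarse ranking partitions the elements into blocks, and within each block the same win-counting step is repeated on the (now much smaller) block, recursively sharpening the resolution. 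Each level is designed to shrink the dislocation by a polynomial factor while spending a geometrically decreasing query budget, so that the total stays $O(n\log n)$ and after a bounded number of levels the dislocation reaches the target $O(\log n)$ scale. Getting both the query accounting and the high-probability dislocation guarantee to hold simultaneously across all levels is the crux of the argument.

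Finally, combining the two bounds, any $s$-optimal $\pi^*$ lies within dislocation $B = O((\alpha+1)\epsilon^{-2}\log n)$ of the computed order $\tau$. I would then observe that for any pair of elements whose $\tau$-positions differ by more than $2B$, their relative order is forced to agree with $\tau$ in every candidate permutation, so the total contribution of such ``far'' pairs to~\eqref{eq:score} is a constant independent of the candidate. Maximizing the score therefore reduces to maximizing a \emph{banded} objective involving only pairs within $\tau$-distance $2B$, which a left-to-right dynamic program solves exactly: sweeping position by position, the state records which elements of the current width-$O(B)$ window have already been placed, giving $2^{O(B)} = n^{O((\alpha+1)\epsilon^{-2})}$ states, and after incorporating the refined constants needed to drive the overall failure probability below $n^{-\alpha}$ one obtains the stated running time $n^{O((\alpha+1)\epsilon^{-4})}$. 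The queries consumed are those of the approximate-sort stage together with the banded pairs, i.e.\ $O(n\log n)$ in total, and a union bound over the polynomially many displacement and approximate-sort failure events keeps the overall error probability at most $n^{-\alpha}$.
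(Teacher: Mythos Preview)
The paper does not actually prove this theorem: it is quoted from \cite{braverman2}, and only a two-sentence overview is offered---the displacement bound of Theorem~\ref{theo:log}, a DP for $k$-position enhanced NSWR running in $O(2^{O(k)}n^2)$, and the remark that ``a specific iterative procedure allows for the computation of an optimal solution of NSWR.'' Your three-stage plan (displacement bound, query-efficient approximate sort, windowed DP) is exactly this outline fleshed out, so at the level of detail the paper provides there is nothing to contrast.

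If one compares instead against the original Braverman--Mossel argument, your Stages~1 and~3 are faithful to it. Stage~2 is where you diverge: Braverman and Mossel obtain the approximate order $\tau$ via a noisy binary \emph{insertion sort} (each new element is located by $O(\log n)$ boosted comparisons along a search path), not by the recursive multi-scale win-counting scheme you sketch. Both routes target $O(n\log n)$ queries and $O(\log n)$ dislocation, so the difference is methodological; your variant is plausible but the analysis that each refinement level simultaneously shrinks the dislocation polynomially while keeping the per-level failure probability summable is the real work, and you have only asserted it. One further loose end: your claimed window $W = O((\alpha+1)\epsilon^{-2}\log n)$ together with $2^{O(W)}$ states yields exponent $\epsilon^{-2}$, not the $\epsilon^{-4}$ in the statement; waving this away as ``refined constants'' hides precisely the place where the structured union bound over inversion families costs additional powers of $\epsilon$, and a clean Hoeffding step will not by itself recover the stated exponent.
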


The proof of this theorem mainly relies on two results. The first one shows that with high probability an optimal solution $\pi^*$ of NSWR is close to the ``hidden'' permutation $\sigma^*$.


\begin{theorem}{\cite{braverman2}} 
\label{theo:log}
Consider the NSWR problem, with $p=1/2+\epsilon, \epsilon>0$, with respect to a permutation $\sigma^*$ and let $\pi^*$ be any s-optimal order assuming that $q$ satisfies Definition~\ref{def:pred}. Let $\alpha>0$. Then there exists a constant $c(\alpha, \epsilon)$ such that except with probability $O(n^{-\alpha})$ it holds that 
$$\max_i |\sigma^*(i) - \pi^*(i)| \le c \cdot \log n = O( \log n).$$
\end{theorem}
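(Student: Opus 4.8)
The plan is to work in the normalization where the hidden order is the identity. Relabelling the elements according to $\sigma^*$, we may assume $\sigma^*=\mathrm{id}$, so that the correct position of $a_\ell$ is $\ell$. For each pair $i<j$ put $X_{ij}:=q(a_i,a_j)\in\{-1,+1\}$; by Definition~\ref{def:pred} the $X_{ij}$ are independent with $\Pr[X_{ij}=1]\ge\tfrac12+\epsilon$, hence $\mathbb{E}[X_{ij}]\ge 2\epsilon$. Decomposing the score~(\ref{eq:score}) pair by pair gives, for every permutation $\pi$, the identity
\begin{equation*}
 s_q(\sigma^*)-s_q(\pi)=2\sum_{(i,j)\in\mathrm{Inv}(\pi)}X_{ij},
\end{equation*}
where $\mathrm{Inv}(\pi)$ is the set of pairs $i<j$ that $\pi$ places in reversed order. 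Since $\pi^*$ is $s$-optimal, $s_q(\pi^*)\ge s_q(\sigma^*)$, so the whole theorem reduces to the statement that, with probability $1-O(n^{-\alpha})$, no maximizer can satisfy $\sum_{(i,j)\in\mathrm{Inv}(\pi^*)}X_{ij}\le 0$ while being displaced by more than $c\log n$. Note that \emph{every} term of this sum is a positively-biased variable, so a nonpositive value is itself already surprising once the sum is long.

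First I would establish a purely combinatorial bridge from displacement to long reversals. Call $(i,j)$, $i<j$, a \emph{$k$-long inversion} of $\pi$ if $j-i\ge k$ and $\pi$ reverses it. If $\pi$ has \emph{no} $k$-long inversion, then every element $a_j$ sitting at position $p$ must keep $a_{j+k},\dots,a_n$ to its right and $a_1,\dots,a_{j-k}$ to its left, forcing $j-k+1\le p\le j+k-1$, i.e.\ $\max_i|\pi(i)-i|<k$. Contrapositively, $\max_i|\pi^*(i)-i|\ge k$ forces at least one $k$-long inversion. Moreover, if $(i,j)$ is reversed then each of the $j-i-1$ truly intermediate elements $a_h$ ($i<h<j$) must itself reverse at least one of the pairs $(i,h),(h,j)$, regardless of where it lands relative to $a_i,a_j$; these forced reversals involve distinct, hence independent, queries. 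Thus a single $k$-long inversion drags along $\Omega(k)$ reversed, positively-biased pairs. It therefore suffices to bound, for each \emph{fixed} far pair, the probability that it is reversed in the optimum, and then union-bound over the at most $\binom{n}{2}$ such pairs.

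The heart is the fixed-pair estimate: for $i<j$ with $j-i\ge k$,
\[
 \Pr\bigl[\,a_j \text{ precedes } a_i \text{ in } \pi^*\,\bigr]\le\exp\!\bigl(-\Omega(\epsilon^2\,(j-i))\bigr).
\]
To obtain it I would use optimality as a certificate generator: conditioned on the reversal, some beneficial local repair of $\pi^*$ must nonetheless have nonpositive gain, and that gain equals a sum of $\Omega(j-i)$ independent, $\epsilon$-biased $\pm1$ queries (drawn from the forced reversed cross-pairs of the previous paragraph). Hoeffding's inequality then bounds the event that such a sum is nonpositive by $\exp(-\Omega(\epsilon^2(j-i)))$. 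Taking $k=c(\alpha,\epsilon)\log n$ with $c=\Theta\bigl((\alpha+1)\epsilon^{-2}\bigr)$ makes each fixed-pair probability $O(n^{-\alpha-2})$, and the union bound over pairs, combined with the bridge, yields $\max_i|\sigma^*(i)-\pi^*(i)|\le c\log n$ except with probability $O(n^{-\alpha})$.

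The step I expect to be the main obstacle is exactly the certificate extraction. The set $B$ of elements that $\pi^*$ has interposed between $a_i$ and $a_j$ is \emph{data-dependent}, and the per-element contributions to a repair have mixed signs, so no single adjacent swap or single reinsertion produces a clean sum of uniformly correctly-biased independent variables: in the extreme case where $a_i,a_j$ are adjacent in $\pi^*$ a naive swap reads off only the one query $X_{ij}$ and gives no concentration at all. One cannot circumvent this by union-bounding the inequality $\sum_{\mathrm{Inv}(\pi^*)}X_{ij}\le 0$ over all orderings either, since the number of permutations with a given number $N$ of inversions grows like $e^{\Theta(N\log n)}$, which swamps the per-permutation bound $e^{-\Omega(\epsilon^2 N)}$ unless $c$ is allowed to grow with $n$. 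Resolving this is precisely the localized analysis we import from Braverman and Mossel~\cite{braverman2}: the repairing move and the queries read off from it must be chosen so that, conditioned on the reversal, the extracted variables are genuinely independent and uniformly biased in the correct direction, and so that only a polynomial family of certificates (indexed, e.g., by a pair together with a window) needs to be controlled. This is also where the $\epsilon^{-2}$ dependence of the constant $c(\alpha,\epsilon)$ originates.
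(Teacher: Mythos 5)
Your outer scaffolding is sound, and it does match the high-level route of Braverman and Mossel: after relabelling so that $\sigma^*=\mathrm{id}$, the score identity $s_q(\sigma^*)-s_q(\pi)=2\sum_{(i,j)\in \mathrm{Inv}(\pi)}X_{ij}$ is correct; so is the combinatorial bridge (no reversed pair at distance $\geq k$ forces $\max_i|\pi(i)-i|<k$), and so is the observation that a reversed pair $(i,j)$ drags along, for each intermediate $h$, at least one reversed pair among $(i,h),(h,j)$ on pairwise distinct queries. The reduction to a fixed-pair estimate plus a union bound over $\binom{n}{2}$ pairs is also logically valid. But the attempt has a genuine gap exactly where you flag it yourself: the heart of the theorem, namely $\Pr[a_j \text{ precedes } a_i \text{ in } \pi^*]\le \exp(-\Omega(\epsilon^2(j-i)))$, is asserted but never proved. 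The one-sentence justification offered for it --- ``some beneficial local repair of $\pi^*$ must have nonpositive gain, and that gain is a sum of $\Omega(j-i)$ independent $\epsilon$-biased queries'' --- is precisely the step your final paragraph then correctly demonstrates to be broken as stated: the set of interposed elements is data-dependent (so conditioning on the reversal destroys the independence you invoke for Hoeffding), the per-element gains of any single swap or reinsertion have mixed signs (elements of true rank outside $[i,j]$ contribute terms of mean roughly zero or negative), and in the degenerate case where $a_i,a_j$ are adjacent in $\pi^*$ the repair reads off a single query with no concentration at all. Resolving this by ``importing the localized analysis of \cite{braverman2}'' is not a proof here, because the statement being proved \emph{is} the Braverman--Mossel theorem (which this paper likewise states only by citation, giving no proof of its own): the import is the entire mathematical content. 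Knowing that one needs a polynomial-size certificate family (indexed, say, by an element and a window of true ranks) together with a deterministic argument that validity of all certificates forces every $s$-optimal order within $O(\log n)$ displacement is a correct diagnosis, but you neither construct the certificates, nor specify the repair move that extracts uniformly biased independent variables from them, nor prove the deterministic bridge.

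A secondary symptom that the accounting is incomplete: you claim $c(\alpha,\epsilon)=\Theta\bigl((\alpha+1)\epsilon^{-2}\bigr)$ as a byproduct of the (unproven) Hoeffding step, whereas the analysis that actually closes this gap in \cite{braverman2} is lossier --- consistent with the $n^{O((\alpha+1)\epsilon^{-4})}$ running time quoted in Theorem~\ref{theorem:mainbraverman} --- so even the stated dependence of the constant on $\epsilon$ should not be trusted. In short: correct setup, correct reductions, correct identification of the obstacle, but the decisive estimate is left to the very reference whose theorem is being proved, so as a standalone argument the proposal is incomplete.
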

The second result is a dynamic programming (DP) algorithm showing that $k$-position enhanced NSWR is solvable in $O(2^{O(k)}n^2)$. Then a specific iterative procedure allows for the computation of an optimal solution of NSWR. Very roughly speaking, the use of queries allows for a $k$-position enhancement for NSWR with $k=O(\log n)$ (thanks to Theorem~\ref{theo:log}), and then the DP algorithm works in polynomial time $O(2^{O(k)}n^2)=n^{O(1)}$.

In this work, we build upon these results to tackle various problems in our prediction setting. Roughly speaking, our framework first generates a warm-start solution using the prediction queries and then utilizes this solution to solve the problem with dynamic programming. The idea of leveraging predictions to obtain a warm-start solution has been explored in a series of papers in the literature~\cite{warm-start_1, warm-start_2}. The following lemma, which makes a connection with the aforementioned results on NSWR, will allow us to get the polynomial time algorithms claimed in Theorem~\ref{theo:main}.

\begin{lemma}\label{lemma:polytime}
    Suppose that a $k$-position enhanced version of permutation problem $P$ is solvable in polynomial time for $k=O(\log n)$. Then $P$ can be solved exactly in polynomial time with high probability, using $O(n \log n)$ prediction queries.
\end{lemma}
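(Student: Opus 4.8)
The plan is to connect $P$ to the NSWR problem through the prediction queries, using Theorem~\ref{theo:log} to obtain a $k$-position enhancement with $k = O(\log n)$, and then invoke the polynomial-time solvability of the $k$-position enhanced version of $P$. The key observation is that the hidden permutation $\sigma^*$ in Definition~\ref{def:pred} is taken with respect to a fixed optimal solution of $P$: the prediction query $q(a_i, a_j)$ reports whether $a_i$ precedes $a_j$ in that optimal permutation, and each query is correct with probability at least $1/2 + \epsilon$ independently. Thus the same queries that define the NSWR instance also encode, with high probability, positional information about an optimal solution of $P$.

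**First I would** observe that the theorem of Braverman and Mossel (Theorem~\ref{theorem:mainbraverman}) gives, in polynomial time and with only $O(n \log n)$ queries, an $s$-optimal permutation $\pi^*$ for the associated NSWR instance, failing with probability at most $n^{-\alpha}$ for any chosen constant $\alpha > 0$. Next I would apply Theorem~\ref{theo:log}: with probability at least $1 - O(n^{-\alpha})$, this $s$-optimal $\pi^*$ satisfies $\max_i |\sigma^*(i) - \pi^*(i)| \le c \cdot \log n$ for a suitable constant $c = c(\alpha, \epsilon)$. **The crucial step** is to rewrite this bound by composing with $\pi^*$: relabelling the elements according to the order $\pi^*$ produces a reindexed instance in which the optimal solution $\sigma^*$ now sits within distance $O(\log n)$ of the identity. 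Formally, consider the permutation $\tau = \sigma^* \circ (\pi^*)^{-1}$; the guarantee $\max_i |\sigma^*(i) - \pi^*(i)| \le c \log n$ translates into the statement that, in the relabelled instance, every element's optimal position differs from its current index by at most $k := c \log n = O(\log n)$. This is precisely the definition of a $k$-position enhanced instance of $P$.

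**Having set up the enhancement,** the remainder is immediate: the reindexed instance of $P$ is $k$-position enhanced with $k = O(\log n)$, so by hypothesis it can be solved exactly in polynomial time. Solving it yields an optimal permutation of the reindexed instance, which we translate back through $\pi^*$ to recover an optimal solution of the original instance of $P$. The total running time is polynomial (the $O(n \log n)$-query NSWR solver runs in time $n^{O((\alpha+1)\epsilon^{-4})}$, and the $k$-position enhanced solver is polynomial by assumption), the number of prediction queries is $O(n \log n)$ as inherited from Theorem~\ref{theorem:mainbraverman}, and the overall failure probability is bounded by the union of the two $O(n^{-\alpha})$ failure events, which remains $O(n^{-\alpha})$, i.e., the algorithm succeeds with high probability.

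**The main obstacle** I anticipate is formalizing the reduction from the positional guarantee of Theorem~\ref{theo:log}, which bounds $|\sigma^*(i) - \pi^*(i)|$, to the $k$-position enhancement condition, which requires that some optimal solution lie within $k$ of the \emph{identity} after relabelling. One must be careful that relabelling the elements by $\pi^*$ does not alter the structure of $P$ or its set of optimal solutions in a way that invalidates the $k$-position enhanced solver — in other words, that $P$ is invariant under relabelling of its elements (which holds since $P$ is defined on an abstract set $A$ and its objective depends only on the ordering). A secondary technical point is ensuring that the optimal solution $\sigma^*$ witnessing the enhancement in the relabelled instance is genuinely an optimal solution of $P$ and not merely an $s$-optimal order of the NSWR instance; this is guaranteed because $\sigma^*$ is, by the construction of the prediction model in Definition~\ref{def:pred}, a fixed optimal solution of $P$ from the outset.
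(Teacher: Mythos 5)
Your proof is correct and follows essentially the same route as the paper's: run the Braverman--Mossel NSWR algorithm with $O(n\log n)$ queries to obtain $\pi^*$, apply Theorem~\ref{theo:log} to get $\max_i|\sigma^*(i)-\pi^*(i)| = O(\log n)$, and reinterpret this via the composition $\sigma^*\circ(\pi^*)^{-1}$ as a $k$-position enhanced instance with $k=O(\log n)$, which the hypothesis solves in polynomial time. Your added remarks on relabelling invariance and the union bound over the two failure events make explicit details the paper leaves implicit, but the argument is the same.
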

\begin{proof}
    Let $\sigma^*$ be an optimal permutation for an instance of the optimization problem $P$. 
    By making $O(n\log n)$ queries, according to Theorem~\ref{theorem:mainbraverman}  we can get in polynomial time with high probability an optimal solution $\pi^*$ for the NSWR problem relative to $\sigma^*$.

    From Theorem~\ref{theo:log}, we know that with high probability $|\sigma^*(i)-\pi^*(i)|=O(\log n)$ for all $i$. Equivalently, for all $j$: 
$$|\sigma^* \circ \pi^{*-1}(j)-j|=O(\log n).$$
    As by assumption the $k$-position enhanced  version of  $P$ is solvable in polynomial time for $k=O(\log n)$, we can find $\sigma^*\circ \pi^{*-1}$, hence $\sigma^*$, in polynomial time.
\end{proof}
Then, we mainly exhibit sufficient conditions for a permutation problem to be polynomial time solvable when being $k$-positioned enhanced, for $k=O(\log n)$. These conditions (decomposability and $c$-locality) and their illustration on classical optimization problems are given in Section~\ref{sec:dec}. The proofs that under these conditions $O(\log n)$-positioned enhanced permutation problems are polynomial time solvable are in Section~\ref{sec:proof}. They are based on DP algorithms, one of which being a generalization of the one of~\cite{braverman2}.






\section{Decomposability and $c$-locality}\label{sec:dec}


We now give the definitions for the decomposition property and $c$-locality, and illustrate them with some problems expressible in these ways. As explained before, each of these properties will allow to design DP algorithms, which is the key step to derive Theorem~\ref{theo:main}.

\subsection{Decomposition property}

To design DP algorithms, we will consider subproblems. Intuitively, for $i<j$ we will consider subproblems of finding and ordering elements in positions $i$ to $j$, i.e., $(a_{\sigma(i)},\dots,a_{\sigma(j)})$, and need a recurrence that allows expressing the subproblem between $i$ and $j$ as a combination of the subproblems between $i$ and $s$, and between $s+1$ and $j$ (for $i<s<j$). A difficulty is that finding and ordering elements from positions $i$ to $j$ typically depends on the elements before (between positions $1$ and $i-1$) and after (between positions $j+1$ and $n$), and on their respective ordering. Roughly speaking, our decomposition property holds when finding and ordering elements from positions $i$ to $j$ only depends on the {\it set} of elements before $i$ and after $j$, not on their particular ordering. 

We now formalize this idea, and illustrate it on three different problems. For a permutation $\sigma:[1,n]\rightarrow [1,n]$, we denote $\sigma(i,j)$ the subpermutation of $\sigma$ on $[i,j]$, $S_\sigma(i,j)$ the set $\{\sigma(i),\dots,\sigma(j)\}$ (indices in positions between $i$ and $j$),  $L_\sigma(k)=S_\sigma(1,k)$ (first $k$ indices, $L$ stands for left) and  $R_\sigma(k)=S_\sigma(k,n)$ (indices in position between $k$ and $n$, $R$ stands for right).


\begin{definition}
\label{def:decomp}
    Let $P$ be a permutation problem, with objective function $f$. We say that $P$ fulfills the decomposition property if there exists a function $g$ such that:
\begin{itemize}
	\item $f_I(\sigma)=g(1,n,\emptyset,\emptyset,\sigma)$;
	\item For any $i<s<j$ in $\{1,\dots,n\}$ and permutation $\sigma$:
\begin{align*}
     g(i,j,L_\sigma(i-1),R_{\sigma}(j+1), \sigma(i,j)) =  \nonumber  &g(i,s,L_\sigma(i-1),R_\sigma (s+1), \sigma(i,s)) \nonumber \\
    & + g(s+1,j,L_\sigma(s),R_{\sigma}(j+1), \sigma(s+1,j))  \nonumber\\
    & + h(L_\sigma(i-1), R_\sigma(j+1), S(i,s), S(s+1,j)), 
\end{align*}	
\noindent for some function $h$.	
\end{itemize}    
     
\end{definition}
Note that $h$ does not depend on the permutation of any elements, only on sets of elements. In this definition, we express the objective function on subproblem $(i,j)$ by a function $g$ that depends on the subpermutation $\sigma(i,j)$ between $i$ and $j$ and on sets (not positions) of elements before $i$ and after $j$ ($L_\sigma(i-1)$ and $R_\sigma(j+1)$). 

The decomposition property states that the value is the sum of the value on the subproblem $(i,s)$ (referred to as the ``left call'' in the sequel), the value on the subproblem $(s+1,j)$ (referred to as the ``right call''), and a quantity (function $h$) that depends only on the sets (not the ordered sets) of elements involved in the decomposition.



\paragraph{Maximum Acyclic Subgraph.}

In the Maximum Acyclic Subgraph problem, we are given a simple directed graph $G=(V,E)$. The goal is to find a permutation $\sigma:[1,n]\rightarrow [1,n]$ which maximizes $f(\sigma)=|\{(v_{\sigma(i)},v_{\sigma(j)})\in E: i<j\}|$. This is a well known NP-hard problem \cite{Kar72}.

As this problem only depends on the relative order of elements (endpoints of arcs) and not on their exact position, it is easy to express it in the form of Definition~\ref{def:decomp}. For given $i<j$ and $\sigma(i,j)$, we simply define: 
$$g(i,j,L,R,\sigma(i,j))= |\{(v_{\sigma(\ell)},v_{\sigma(r)})\in E: i\leq \ell < r \leq j\}|.$$
This is just the number of arcs ``well ordered'' by $\sigma$ with both endpoints between positions $i$ and $j$ ($g$ is independent of $L$ and $R$). Obviously the first item of Definition~\ref{def:decomp} is verified (for $i=1$ and $j=n$ we count the total number of arcs ``well ordered'' by $\sigma$).

For the second item, $g(i,s,L_\sigma(i-1),R_\sigma (s+1), \sigma(i,s))$ (resp. $ g(s+1,j,L_\sigma(s),R_{\sigma}(j+1), \sigma(s+1,j))$) counts the number of arcs well ordered by $\sigma$ with both endpoints in positions between $i$ and $s$ (resp. between $s+1$ and $j$). So, to get $g(i,j,L_\sigma(i-1),R_\sigma(j+1),\sigma(i,j))$, we only have to add arcs $(v_{\sigma(\ell)},v_{\sigma(r)})$ with $i\leq \ell \leq s$ and $s+1\leq r\leq j$. In other words, if we denote $cut(A,B)$ the number of arcs $(v_\ell,v_r)$ with $\ell \in A$ and $r\in B$, we have $$ h(L_\sigma(i-1), R_\sigma(j+1), S(i,s), S(s+1,j)=cut(S(i,s),S(s+1,j)).$$

Note that this immediately generalizes to any problem whose objective function only depends on the relative order of endpoints of arcs (for instance the weighted version of Maximum Acyclic Subgraph), not on their exact positions. 



\paragraph{Minimum Linear Arrangement.}
Let $G = (V, E)$ be a simple undirected graph. Given a permutation $\sigma:[1,n]\rightarrow [1,n]$, let us call the weight of an edge as the absolute difference between the positions assigned to its endpoints in $\sigma$. The Minimum Linear Arrangement problem consists of finding a permutation of the vertices of $G$ such that the sum of the weights of its edges is minimized. More formally, we would like to find a permutation $\sigma:[1,n]\rightarrow [1,n]$ that minimizes $ \sum_{\{v_{\sigma(i)}, v_{\sigma(j)}\} \in E} |j-i|$. 

While Maximum Acyclic Subgraph deals (only) with the order of endpoints of arcs in the permutation, Minimum Linear Arrangement depends on their exact position. We now show that the decomposition property also allows to deal with such a problem. The idea is the following: consider $i<j$, and fix the subpermutation $\sigma(i,j)$, the sets $L_\sigma(i-1)$ of elements ``on the left'', and  $R_\sigma(j+1)$ of elements ``on the right''. Then (see Figure~\ref{fig:mla1} for an illustration of the contribution of edges):
\begin{itemize}
    \item for an edge with both endpoints in $S(i,j)$, we know its exact contribution (as $\sigma(i,j)$ is fixed). These edges have a global contribution $N_1=\sum_{\{v_{\sigma(\ell)}, v_{\sigma(r)}\} \in E,i\leq \ell,r\leq j} |r-\ell|$.  Note that $N_1$ depends only on $\sigma(i,j)$.
    \item For an edge with one endpoint in the left part $L_\sigma(i-1)$ and one endpoint in $S(i,j)$, i.e., $\{v_{\sigma(\ell)}, v_{\sigma(r)}\}$ with $\ell <i$ and $i\leq r\leq j$, we will charge only for this edge the (yet partial) contribution $r-i$ (instead of $r-\ell$, as the exact left position is not fixed yet). Let $N_2$ be the sum of these contributions (note that $N_2$ depends only on $L_\sigma(i-1)$ and $\sigma(i,j)$).
    \item Similarly, for an edge with one endpoint in the right part $R_\sigma(j+1)$ and one endpoint in $S(i,j)$, i.e., $\{v_{\sigma(\ell)}, v_{\sigma(r)}\}$ with $i\leq \ell \leq j$ and $j< r$, we will charge only for this edge the (yet partial) contribution $j-\ell$. Let $N_3$ be the sum of these  contributions (note that $N_3$ depends only on $R_\sigma(j+1)$ and $\sigma(i,j)$).
\end{itemize}
We then define $g(i,j,L_\sigma(i-1),R_\sigma(j+1),\sigma(i,j))=N_1+N_2+N_3$.

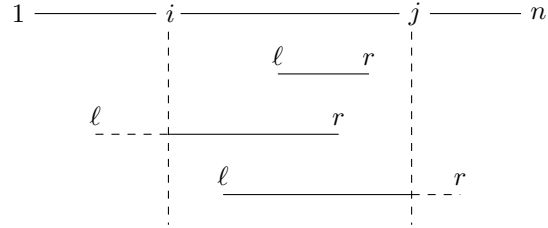
\begin{figure}[!h]
\begin{center}
\begin{tikzpicture}[scale=0.8]

\draw (1,6) node[left] {$1$} -- (3,6) node[right] {$i$};
\draw (3.4,6)  -- (7,6) node[right] {$j$};
\draw (7.5,6)  -- (9,6) node[right] {$n$};
\draw[dashed] (3.2,5.7) -- (3.2,2.5);
\draw[dashed] (7.2,5.7) -- (7.2,2.5);
\draw (5,5) node[above] {$\ell$} -- (6.5,5) node[above] {$r$};

\draw (3.2,4)  -- (6,4) node[above] {$r$};
\draw[dashed] (2,4) node[above] {$\ell$} -- (3.2,4);

\draw[dashed] (7.2,3)  -- (8,3) node[above] {$r$};
\draw (4.1,3) node[above] {$\ell$} -- (7.2,3);

\end{tikzpicture}
\caption{Contribution in $g$ of edges, depending on the positions of their extremities. The contribution is in solid line. If $\ell<i$ and $r>j$ the contribution is 0.}
  \label{fig:mla1}
\end{center}
\end{figure}

It is clear that the first item of Definition~\ref{def:decomp} is satisfied (when $i=1$ and $j=n$, $N_1$ equals the objective function, $N_2=N_3=0$).

For the second item, let $i<s<j$. Consider an edge $\{v_{\sigma(\ell)},v_{\sigma(r)}\}$. We show, for each possible case, how to recover the contribution of this edge to subproblem $(i,j)$ (i.e., in $g(i,j,L_\sigma(i-1),R_\sigma(j+1),\sigma(i,j))$), see Figure~\ref{fig:mla2} for an illustration:
\begin{itemize}
    \item If both $\ell$ and $r$ are in $[i,s]$, its contribution in $g(i,j,L_\sigma(i-1),R_{\sigma}(j+1), \sigma(i,j))$ is already counted in the ``left call'' $g(i,s,L_\sigma(i-1),R_\sigma (s+1), \sigma(i,s))$.
    \item Similarly, if both $\ell$ and $r$ are in $[s+1,j]$, its contribution in  $g(i,j,L_\sigma(i-1),R_{\sigma}(j+1), \sigma(i,j))$ is already counted in the ``right call'' $g(s+1,j,L_\sigma(s),R_{\sigma}(j+1), \sigma(s+1,j))$.
    \item If $i\leq \ell \leq s$ and $s+1\leq r \leq j$, then the contribution is $(s-\ell)$ in the left call and  $r-(s+1)$ in the right call, so in total $r-\ell-1$. It only missed 1 to get the correct contribution $r-\ell$. So for these edges we have to add in total a contribution $C_1=cut(S(i,s),S(s+1,j))$. 
    \item If $\ell <i$ and $i\leq r \leq s$, then the contribution on the left call is $r-i$, the correct one. Similarly, if $s+1\leq \ell \leq j$ and $j+1\leq r$ the contribution of the right call is the correct one ($j-\ell$).
     \item If $\ell <i$ and $s+1\leq r \leq j$, the contribution in the left call is 0, the one in the right call is $r-(s+1)$. It misses $s+1-i$ to get the correct contribution $r-i$. So for these edges we have to add in total $C_2=(s+1-i) \cdot cut(L_\sigma(i-1),S(s+1,j))$.
     \item Similarly, if $i\leq \ell \leq s$ and $r> j$, to get the correct charge we miss $C_3=(j-s) \cdot cut(R_\sigma(j+1),S(i,s))$.
\end{itemize}
Then, we define $h(L_\sigma(i-1), R_\sigma(j+1), S(i,s), S(s+1,j)=C_1+C_2+C_3$.

\begin{figure}[!h]
\begin{center}
\begin{tikzpicture}[scale=0.74]

\draw (1,6) node[left] {$1$} -- (3,6) node[right] {$i$};
\draw (3.4,6)  -- (5,6) node[right] {$s$};
\draw (5.5,6)  -- (6,6) node[right] {$s+1$};
\draw (7.2,6)  -- (9,6) node[right] {$j$};
\draw (9.5,6)  -- (11,6) node[right] {$n$};

\draw[dashed] (3.2,5.7) -- (3.2,0.5);
\draw[dashed] (9.2,5.7) -- (9.2,0.5);

\draw[dashed] (5.2,5.7) -- (5.2,0.5);
\draw[dashed] (6.5,5.7) -- (6.5,0.5);

\draw[blue] (4,5) node[above] {$\ell$} -- (4.8,5) node[above] {$r$};

\draw[orange] (7.5,5) node[above] {$\ell$} -- (8.5,5) node[above] {$r$};

\draw[blue] (4.2,4) node[above] {$\ell$} -- (5.2,4);
\draw[red] (5.2,4)  -- (6.5,4);
\draw[orange] (6.5,4)  -- (8.2,4) node[above] {$r$};

\draw[blue] (3.2,3)  -- (4.5,3) node[above] {$r$};
\draw[dashed] (2,3) node[above] {$\ell$} -- (3.2,3);

\draw[dashed] (9.2,3)  -- (10,3) node[above] {$r$};
\draw[orange] (8.5,3) node[above] {$\ell$} -- (9.2,3);

\draw[dashed] (2.2,2) node[above] {$\ell$} -- (3.2,2);
\draw[red] (3.2,2)  -- (6.5,2);
\draw[orange] (6.5,2)  -- (8.2,2) node[above] {$r$};

\draw[blue] (4,1) node[above] {$\ell$} -- (5.2,1);
\draw[red] (5.2,1)  -- (9.2,1);
\draw[dashed] (9.2,1)  -- (10.2,1) node[above] {$r$};

\end{tikzpicture}
\caption{Decomposition in $g$: blue (resp. orange) corresponds to the contribution in the left call (resp. right call), red corresponds to missing contributions (that are counted in $h$). If $\ell<i$ and $r>j$ the contribution is 0.}
  \label{fig:mla2}
\end{center}
\end{figure}
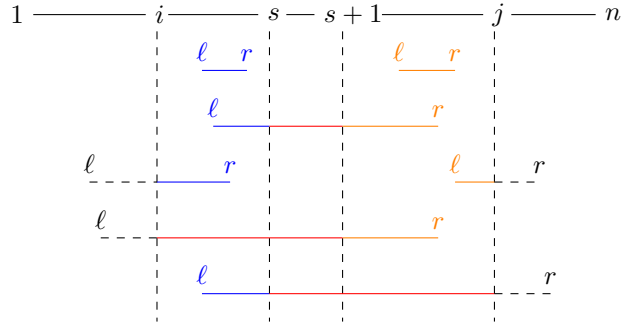

\paragraph{Single-machine Sum of Completion Time Problem.}
We consider the following scheduling problem (denoted $1|prec|\sum C_j$ in the classical Graham's notation for scheduling problems), known to be strongly NP-hard~\cite{Sequencing-jobs}: we are given a set $J$ of $n$ jobs  and a set of precedence constraints forming a DAG $(J,A)$. The goal is to order the jobs, respecting precedence constraints, so as to minimize the sum of completion time of jobs. 

Consider $i<j$, $\sigma(i,j)$ and $L_\sigma(i-1)$. With this information, we can compute the completion time of each job in $S(i,j)$ (as we know the jobs before them ($L_\sigma(i-1)$) and the order $\sigma(i,j)$ of jobs in $S(i,j)$). So we simply define $g$ as the sum of these completion times ($g$ depends on $S(i,j)$ and $L_\sigma(i-1)$), with of course value $\infty$ if the order $\sigma(i,j)$ violates any constraints. 

Item 1 of Definition~\ref{def:decomp} is trivially satisfied. For item 2, the left call computes the sum of completion times for jobs in $S(i,s)$, and the right call the sum of completion times for jobs in $S(s+1,j)$. Then $h$ is $\infty$ if a constraint is violated (between a job whose position is between $s+1$ and $j$, and a job whose position is between $i$ and $s$), and 0 otherwise. 

Here again, this generalizes to many other single machine scheduling problems (involving other constraints and/or weights on jobs and/or other objective functions like sum of tardiness).

\subsection{$c$-locality}

The $c$-locality property states that the cost function of the problem depends only locally (with respect to the permutation/solution) on pairs of distinct elements. More formally, we have the following definition. 

\begin{definition}
    \label{def:c-local}
    Let $P$ be a permutation problem, with cost function $f$. We say that $P$ is $c$-local if, on an instance $I$ asking for a permutation $\sigma$ on a set $A=\{a_1,\dots,a_n\}$:
    $$f_I(\sigma) = \sum_{i} cost_I(a_{\sigma(i-c)},a_{\sigma(i-c+1)}, \dots,a_{\sigma(i)}).$$
for some cost function\footnote{To be more precise, to deal with the cases $i\leq c$ in the sum it should be $cost_I(a_{\sigma(t)},a_{\sigma(t+1)},\dots, a_{\sigma(i)})$ for $t=\max\{i-c,1\}$.} $cost_I$.
\end{definition}

\smallskip\noindent\textbf{TSP.} Given a complete graph on set $V$ of vertices and a distance function $d(v_i,v_j)$ on pair of vertices, TSP asks to find a permutation $\sigma$ that minimizes $\sum_{i=1}^n d(v_{\sigma(i)},v_{\sigma(i+1)})$ (where $v_{\sigma(n+1)}$ is $v_{\sigma(1)}$). The problem is then trivially 1-local (we can easily reformulate to get rid of the last term $d(v_{\sigma(n)},v_{\sigma(1)})$).

Other routing problems can be shown to be $1$-local as well.


\smallskip\noindent\textbf{Social Welfare Maximization in Keyword Auctions with Externalities.}
In standard Keyword Auctions~\cite{EdelmanOS2007,Varian2007}, a set $N = \{1, \ldots, n\}$ of advertisers compete over a set $K = \{1, \ldots, k\}$ of advertisement (or simply ad) slots, where $k \leq n$. Each player $i \in N$ has a valuation $v_i$ per click and their ad has an intrinsic click probability $q_i \in (0, 1]$. Each slot $j \in K$ is associated with a click-through rate (ctr) $\lambda_j$, with $1 \geq \lambda_1 > \lambda_2 > \cdots \lambda_k > 0$. The overall ctr of an ad $i$ in slot $j$ is $\lambda_j q_i$. 
In the following, we assume that $k = n$, i.e., the number of slots $k$ is equal to the number of ads $n$, for simplicity, by setting  $\lambda_{k+1} = \cdots = \lambda_n = 0$ in case where $k < n$. 

We aim to compute an assignment $\pi : N \to K$ of ads to slots (which for $k = n$ is a permutation of ads) so that the resulting expected \emph{social welfare}, which is $\sum_{i=1}^n v_i \lambda_{\pi(i)} q_i$, is maximized. 

In Keyword Auctions with Externalities \cite{GattiRSV2018,FotakisKT2011}, the overall click-through rate of an ad $i$ appearing in slot $\pi(i)$ also depends on the ads appearing in the $c$ slots above $\pi(i) - c, \pi(i) - c+1, \ldots, \pi(i)-1$, where $c$ is the window size. 
The influence of an ad $j$ on the ctr of an ad $i$ is quantified by a function $w_{ji} : [n] \to [0,1]$ of their distance in $\pi$. Specifically, we let $d_\pi(j,i) = \pi(i)-\pi(j)$ and let $w_{ji}(d_\pi(j, i))$ be $j$'s influence on $i$'s crt, given their slots assigned by $\pi$. To enforce the condition on the window size, we assume that $w_{ji}(x) = 0$, if $x \leq 0$ or $x > c$. Then, in the model of Keyword Auctions with Externalities presented in \cite{FotakisKT2011}, the influence on $i$'s ctr of the other ads appearing the $c$ slots above $i$ in $\pi$ is:
$$ Q_i(\pi) = 1 - (1-q_i) \prod_{j = \pi^{-1}(\ell): \ell = \max\{ \pi(i) - c, 1\}}^{\max\{\pi(i)-1, 1\}}\Big(1-q_{j}w_{ji}(d_\pi(j, i))\Big).$$
For simplicity, we only present calculation of ctr for the 
case of \emph{positive externalities} from \cite{FotakisKT2011} (the  calculation of ctr for the case of  \emph{negative externalities} is defined similarly). 

We aim to compute an assignment $\pi : N \to [n]$ of ads to slots so that the resulting expected social welfare under externalities, which is $\sum_{i=1}^n v_i \lambda_{\pi(i)} Q_i(\pi)$, is maximized. 

Social welfare maximization in keyword auctions with externalities and window size $c$ is shown to be NP-hard in \cite{FotakisKT2011,GattiRSV2018}. It is not hard to verify that computing the maximum social welfare under externalities in the model of \cite{GattiRSV2018,FotakisKT2011} is an example of a $c$-local permutation problem, because the overall ctr of each ad $i$ depends on the ads appearing in the $c$ slots above

\section{Proof of Theorem~\ref{theo:main}}
\label{sec:proof}
In this section, we present the proof of Theorem~\ref{theo:main}, the main result of this paper. Using Lemma~\ref{lemma:polytime}, what remains to be shown is that if a problem is decomposable or $c$-local, then it is polynomial time solvable when being $O(\log n)$-position enhanced. We provide a corresponding DP algorithm for decomposition in Lemma~\ref{lem:presort_decomp} (which extends the one of~\cite{braverman2}) and for $c$-locality in Lemma~\ref{lem:presort_local}




\begin{lemma}
\label{lem:presort_decomp}
If a permutation optimization problem $P$ is decomposable, then its $k$-position enhanced version is solvable in time $O(n^{d+1} \cdot 2^{O(k)})$ if the value of a solution is computable in time $O(n^d)$.
\end{lemma}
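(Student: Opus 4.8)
The plan is to design a left-to-right dynamic program whose only nontrivial state is the \emph{set} of elements already placed, exploiting the decomposition property so that the contribution of each newly placed element is charged locally through $g$ (on singletons) and $h$. The first step is to extract the combinatorial consequence of the $k$-position enhancement that keeps the state space small. If $\sigma^*$ is an optimal permutation with $|\sigma^*(i)-i|\le k$ for all $i$, then for each prefix length $i$ the set $L:=L_{\sigma^*}(i)=\{\sigma^*(1),\dots,\sigma^*(i)\}$ is sandwiched as
\[
\{1,\dots,i-k\}\subseteq L\subseteq\{1,\dots,i+k\},\qquad |L|=i .
\]
Indeed, any element of index $t\le i-k$ sits in a position $p\le t+k\le i$ and hence lies in $L$, while any element placed in a position $p\le i$ has index at most $p+k\le i+k$. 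Thus $L$ is determined by choosing which $k$ of the $2k$ indices in the window $\{i-k+1,\dots,i+k\}$ it contains, so there are at most $\binom{2k}{k}=2^{O(k)}$ admissible prefix sets for each $i$.

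Next I would set up the DP. For an admissible prefix set $L$ of size $i$, let $V(i,L)$ be the optimum (max or min, with $\pm\infty$ for infeasibility) of $g(1,i,\emptyset,R,\sigma(1,i))$ over all subpermutations $\sigma(1,i)$ ordering exactly the elements of $L$, where crucially $R=\{1,\dots,n\}\setminus L$ is \emph{determined} by $L$. This is the point that lets me avoid carrying the right-hand set as a separate coordinate: for a prefix subproblem the set to the right is forced by the prefix. Applying the decomposition of Definition~\ref{def:decomp} with the split that peels off the last position ($s=i-1$ on the range $[1,i]$) gives, for the element $t=\sigma(i)$ placed last,
\[
V(i,L)=\operatorname*{opt}_{t\in L,\ |t-i|\le k}\Big( V(i-1,L\setminus\{t\}) + g(i,i,L\setminus\{t\},R,\,t) + h(\emptyset,R,L\setminus\{t\},\{t\}) \Big),
\]
with $R=\{1,\dots,n\}\setminus L$. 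Optimal substructure holds because, once $t$ is fixed, the two added terms depend only on the sets $L\setminus\{t\}$, $\{t\}$, $R$ and not on the ordering of the first $i-1$ positions, so optimizing the whole reduces to optimizing $V(i-1,L\setminus\{t\})$. The base cases are the singletons $V(1,\{t\})=g(1,1,\emptyset,\{1,\dots,n\}\setminus\{t\},t)$ (with the small cases $i\le 2$, where no admissible split exists, computed directly from $g$), the answer is $f_I(\sigma^*)=V(n,\{1,\dots,n\})$, and the optimal permutation is recovered by backtracking.

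For the running time there are $O(n)\cdot 2^{O(k)}$ states (one index $i$, at most $2^{O(k)}$ sets $L$); each transition ranges over the $t$ with $|t-i|\le k$, i.e.\ $O(k)$ choices; and each choice evaluates one $h$-value and one singleton $g$-value. Since the objective of a full solution is computable in $O(n^d)$ and these are components of that same decomposition, each such evaluation costs $O(n^d)$, giving a total of $O(n)\cdot 2^{O(k)}\cdot O(k)\cdot O(n^d)=O(n^{d+1}\,2^{O(k)})$, as claimed.

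The hard part will be the two modelling checks rather than any calculation: first, justifying that the local functions $g$ (on singletons) and $h$ can each be evaluated within the $O(n^d)$ budget used for a full solution; and, more importantly, verifying that restricting to prefix subproblems—so that $R$ becomes a function of $L$—is fully compatible with Definition~\ref{def:decomp}. The definition deliberately allows $g$ to depend on the right-hand set, and the whole efficiency argument rests on the observation that along the prefix recursion this right-hand set is always determined by the prefix and never an independent degree of freedom, which is exactly what collapses the state from $(i,L,R)$ to $(i,L)$ and yields the $2^{O(k)}$ bound.
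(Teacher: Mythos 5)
Your proof is correct, but it takes a genuinely different route from the paper's. The paper follows Braverman--Mossel and runs a divide-and-conquer DP over \emph{intervals} $[i,j]$: it guesses all three sets $I'_L(i)$, $I'(i,j)$, $I'_R(j)$ (sandwiched between explicit $I^-$/$I^+$ windows, at most $2^{6k}$ combinations), and solves $\mathcal{I}(i,j)$ by splitting at a midpoint $s$ and enumerating the $2^{O(k)}$ ways to partition $I'(i,j)$ between $\mathcal{I}(i,s)$ and $\mathcal{I}(s+1,j)$. You instead run a left-to-right \emph{prefix} DP with state $(i,L)$, exploiting the observation that for prefix subproblems the right-hand set is forced to be the complement of $L$, and you apply the decomposition identity of Definition~\ref{def:decomp} only at the degenerate splits $s=i-1$ on ranges $[1,i]$ to peel off the last element. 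This buys a smaller state space ($\binom{2k}{k}$ prefix sets per position versus the paper's three-set guessing), a simpler optimal-substructure argument, and in fact a weaker hypothesis: you never use the decomposition at interior splits, whereas the paper's recursion does. Structurally your DP is closer to the paper's own algorithm for $c$-local problems (Lemma~\ref{lem:presort_local}) than to its decomposition algorithm. Your bookkeeping is sound: the window argument for admissible prefix sets is the same sandwich the paper uses for $I^*_L(i)$, the $i\le 2$ base cases are handled directly, transitions into inadmissible sets $L\setminus\{t\}$ are automatically excluded (an element of index $i+k$ in a length-$i$ prefix can only sit in position $i$), and restricting to $|\sigma(p)-p|\le k$ is justified exactly by the $k$-position enhancement. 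The one caveat you flag---that the individual $g$ and $h$ evaluations must fit in the $O(n^d)$ budget hypothesized for evaluating a full solution---is not a defect relative to the paper: the paper's proof makes the same implicit assumption when it combines table entries via $h$ and evaluates its base cases, so both arguments carry the same modelling convention, and your accounting $O(n)\cdot 2^{O(k)}\cdot O(k)\cdot O(n^d)=O(n^{d+1}\cdot 2^{O(k)})$ matches the claimed bound.
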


This gives a polynomial computation time when $k=O(\log n)$.

\begin{proof}
We will use dynamic programming to find an optimal solution for the optimization problem $P$. Let $\sigma^*$ be an (unknown) optimal permutation such that $|\sigma^*(i)-i|\leq c\log n$ for all $i$. 

Let $i<j$ be any indices. Let $I^*(i,j)$ denote the elements in positions between $i$ and $j$ in $\sigma$, i.e.,  
$$I^*(i,j)=\{a_{\sigma(i)}, a_{\sigma(i+1)},\ldots, a_{\sigma(j)}\}.$$ 
Moreover, let 

$$I^*_L(i)=\{a_{\sigma(1)}, a_{\sigma(2)},\ldots, a_{\sigma(i-1)}\}$$ 
be the elements of the left of $i$ and 
$$I^*_R(j)=\{a_{\sigma(j+1)}, a_{\sigma(j+2)},\ldots, a_{\sigma(n)}\}$$
the elements on the right of $j$.
By assumption, we have  
$I^{-}_L(i) \subseteq I^*_L(i) \subseteq I^{+}_L(i)$, $I^{-}(i,j) \subseteq I^*(i,j) \subseteq I^{+}(i,j)$ and $I^{-}_R(j) \subseteq I^*_R(j) \subseteq I^{+}_R(j)$ where
$$I^+_L(i) = \{a_{1},a_{2},\ldots,a_{i+k-1}\},$$ 
$$I^-_L(i) = \{a_{1},a_{2},\ldots,a_{i-k-1}\},
$$
$$
I^+(i,j) = \{a_{i-k},a_{i-k+1},\ldots,a_{j+k}\},$$ 
$$I^-(i,j) = \{a_{i+k},a_{i+k+1},\ldots,a_{j-k}\},
$$
$$
I^+_R(j) = \{a_{j+1-k},a_{j+2-k},\ldots,a_{n}\},$$ 
$$I^-_R(j) = \{a_{j+1+k},a_{j+2+k},\ldots,a_{n}\}.
$$

Thus, selecting the sets $I^*_L(i)$, $I^*(i,j)$ and $I^*_R(j)$ involves selecting $k$ elements from a list of $2k$ elements for $I^*_L(i)$ and $2k$ elements from $4k$ elements for $I^*(i,j)$.  Once $I^*_L(i)$, $I^*(i,j)$ are fixed, the remaining elements belong to $I^*_R(j)$ (see Figure~\ref{fig:dp_sets} for an illustration). Thus the number of different guesses we have to do is bounded by $2^{2k+4k}=2^{6k}$. 

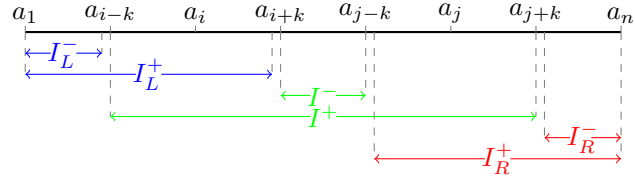
\begin{figure}[!h]
\begin{center}

\begin{tikzpicture}[scale=0.56]

    \draw[thick] (0,0) -- (14,0);

    \foreach \x in {0, 2, 4, 6, 8, 10, 12, 14} {
        \draw[gray, dashed] (\x, 0) -- (\x, 0.15);
    }

    \draw[gray, dashed] (0,0) -- (0,-1);
    \draw[gray, dashed] (1.8,0) -- (1.8,-0.5);
    \draw[gray, dashed] (5.8,0) -- (5.8, -1);
    \draw[gray, dashed] (6,0) -- (6,-1.5);
    \draw[gray, dashed] (8,0) -- (8,-1.5);
    \draw[gray, dashed] (2,0) -- (2,-2);
    \draw[gray, dashed] (12,0) -- (12,-2);
    \draw[gray, dashed] (12.2,0) -- (12.2,-2.5);
    \draw[gray, dashed] (8.2,0) -- (8.2,-3);
    \draw[gray, dashed] (14,0) -- (14,-3);

    \node[above] at (0,0) {$a_1$};
    \node[above] at (2,0) {$a_{i-k}$};
    \node[above] at (4,0) {$a_i$};
    \node[above] at (6,0) {$a_{i+k}$};
    \node[above] at (8,0) {$a_{j-k}$};
    \node[above] at (10,0) {$a_j$};
    \node[above] at (12,0) {$a_{j+k}$};
    \node[above] at (14,0) {$a_n$};

    \draw[blue, <-] (0, -0.5) -- (0.6, -0.5);
    \draw[blue, ->] (1.2, -0.5) -- (1.8, -0.5);
    \node[blue] at (0.9, -0.5) {$I_L^-$};

    \draw[blue, <-] (0, -1) -- (2.6, -1);
    \draw[blue, ->] (3.2,-1) -- (5.8,-1);
    \node[blue] at (2.9, -1) {$I_L^+$};

    \draw[green, <-] (6, -1.5) -- (6.7, -1.5);
    \draw[green, ->] (7.3, -1.5) -- (8, -1.5);
    \node[green] at (7, -1.5) {$I^-$};

    \draw[green, <-] (2, -2) -- (6.7, -2);
    \draw[green, ->] (7.3, -2) -- (12, -2);
    \node[green] at (7, -2) {$I^+$};

    \draw[red, <-] (12.2, -2.5) -- (12.8, -2.5);
    \draw[red, ->] (13.3, -2.5) -- (14, -2.5);
    \node[red] at (13.1, -2.5) {$I_R^-$};

    \draw[red, <-] (8.2, -3) -- (10.8, -3);
    \draw[red, ->] (11.4, -3) -- (14, -3);
    \node[red] at (11.1, -3) {$I_R^+$};

\end{tikzpicture}
\caption{Selecting the set $I^*$ involves choosing $j - i + 1$ elements that include all elements of $I^-$ and are contained within $I^+$ (highlighted in green). This corresponds to selecting $2k$ elements from the list $\{a_{i-k}, a_{i-k+1}, \dots, a_{i+k-1}, a_{j-k+1}, \dots, a_{j+k}\}$ of $4k$ elements.
Similarly, selecting $I^*_L$ requires choosing $i - 1$ elements that include $I^-_L$ and are contained within $I^+_L$ (in blue). This involves selecting $k$ elements from the list 
$\{a_{i-k}, \dots, a_{i+k-1}\}$ of $2k$ elements. The same logic applies for $I^*_R$ (in red).}
\label{fig:dp_sets}
\end{center}
\end{figure}

Now, let us define for any $i,j$ the set $\mathcal{S}(i,j)$ of sets $I'(i,j)$ of size $j-i+1$ such that $I^{-}(i,j) \subseteq I'(i,j) \subseteq I^{+}(i,j)$. We have that $I^*(i,j)\in \mathcal{S}(i,j)$.

Moreover, let $\mathcal{L}(i-1)$ be the set of sets $I_L'(i)$ of size $i-1$ such that $I^{-}_L(i) \subseteq I'_L(i) \subseteq I^{+}_L(i)$. We have that $I^*_L(i) \in \mathcal{L}(i-1)$. Similarly, we define the set $\mathcal{R}(j+1)$ of sets $I_R'(j)$ of size $n-j$.

We now define the following problem $\mathcal{I}(i,j)$: for each $I'(i,j)\in \mathcal{S}(i,j)$, each $I'_L(i)\in \mathcal{L}(i-1)$ and each $I'_R(j)\in \mathcal{R}(j+1)$ such that $I'(i,j)$, $I'_L(i)$ and $I'_R(j)$ are pairwise disjoint, find a permutation $\sigma'$ of elements in $I'(i,j)$ such that:
\begin{itemize}
    \item $|\sigma'(t)-t|\leq k$,
    \item $\{a_{\sigma'(i)},\dots,a_{\sigma'(j)}\}=I'(i,j)$ 
    \item $\sigma'(i,j)$ optimizes the function $g$ given in the decomposition property of $P$.
\end{itemize}


Based on the decomposition property of $g$, we will solve the problem via dynamic programming. Note that we find an optimal solution by solving $\mathcal{I}(1,n)$. Assume for the sake of simplicity that $n$ is a power of two. We will solve $\mathcal{I}(1,n)$ using the solutions of $\mathcal{I}(1,n/2)$ and $\mathcal{I}(n/2+1,n)$, and so on. We solve the leaves of the tree (containing only one element) in constant time, and we have it total $n-1$ subproblems.

Let us explain how we solve $\mathcal{I}(i,j)$ using the solutions of $\mathcal{I}(i,s)$ and $\mathcal{I}(s+1,j)$, where $i \le s \le j$. 

Let $I'(i,j)\in \mathcal{S}(i,j)$, $I_L'(i)\in \mathcal{L}(i-1)$ and $I_R'(j)\in \mathcal{R}(j+1)$. Thanks to the decomposition property on $g$ (see Definition~\ref{def:decomp}), $I'(i,j)$, $I_L'(i)$ and $I_R'(j)$ being fixed, an optimal $\sigma'(i,j)$ is composed of an optimal on $[i,s]$ and an optimal solution on $[s+1,j]$. There are at most $2^{O(k)}$ choices to partition $I'(i,j)$ into two sets $I'(i,i+s)\in \mathcal{S}(i,s)$ and $I'(s+1,j)\in \mathcal{S}(s+1,j)$. So we can compute an optimal solution of our subproblem using $2^{O(k)}$ calls to the DP table (on subproblems in $\mathcal{I}(i,s)$ and $\mathcal{I}(s+1,j)$.
\end{proof}

Now we can also prove a similar lemma for the case of a $c$-local permutation optimization problem, using a different DP.

\begin{lemma}
\label{lem:presort_local}
Let $P$ be a $c$-local permutation  problem. Its $k$-position enhanced version is solvable in time $O(n \cdot 2^{2k} \cdot k^{c+1})$.
\end{lemma}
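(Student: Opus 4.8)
The plan is to solve the $k$-position enhanced problem by a dynamic program that fills positions $1,2,\dots,n$ from left to right, one element at a time. Two structural facts keep the state small. First, $c$-locality means the objective $f_I(\sigma)=\sum_i cost_I(a_{\sigma(i-c)},\dots,a_{\sigma(i)})$ is a sum of window costs, each depending only on the $c+1$ consecutive elements ending at a position; hence when we commit position $i$, the only new contribution to the objective is the single window ending at $i$, and it is computable from the elements placed in positions $i-c,\dots,i$. Second, the $k$-position constraint $|\sigma^*(p)-p|\le k$ localizes which element may occupy position $p$: element $a_t$ can sit at position $p$ only if $p\in[t-k,t+k]$. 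Consequently, after positions $1,\dots,i$ are filled, every $a_t$ with $t\le i-k$ is necessarily already placed, every $a_t$ with $t\ge i+k+1$ is necessarily still free, and only the $2k$ elements of the ``frontier'' $W_i=\{i-k+1,\dots,i+k\}$ are undetermined.

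First I would define the DP state as a triple $(i,S,\tau)$: the position $i$ just filled, the subset $S\subseteq W_i$ recording which frontier elements have already been placed, and the ordered tuple $\tau=(a_{\sigma(i-c+1)},\dots,a_{\sigma(i)})$ of the last $c$ placed elements. The set $S$ carries all the information about availability of elements for the remaining positions (together with the deterministic status of the non-frontier elements), while $\tau$ carries exactly the history needed to evaluate the next window cost; the two pieces are orthogonal because the elements recorded in $\tau$ are already placed and so never constrain future availability. There are $O(n)$ choices of $i$, at most $2^{2k}$ subsets $S$, and, since each of the $c$ entries of $\tau$ ranges over the $O(k)$ elements that may legally occupy its position, at most $O(k^c)$ tuples $\tau$. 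The DP value $\mathrm{dp}[i,S,\tau]$ is the optimum, over all partial assignments of positions $1,\dots,i$ consistent with $(S,\tau)$ and with the $k$-position constraint, of the total cost of all windows ending at a position $\le i$.

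The recurrence advances from $i$ to $i+1$ by choosing the element $a_t$ to place in position $i+1$. The candidate $t$ must lie in $[i+1-k,i+1+k]$ and be currently free, which $S$ determines, so there are $O(k)$ choices. For each choice I would: (i) compute the cost $cost_I$ of the window ending at $i+1$ from $\tau$ and $a_t$, handling the short windows near the left boundary via the $t=\max\{i-c,1\}$ convention; (ii) form the successor state by advancing the frontier to $W_{i+1}$, which drops $a_{i-k+1}$ and admits $a_{i+1+k}$, discarding the transition as infeasible if the now-forced element $a_{i-k+1}$ is not placed, and by shifting $\tau$ and appending $a_t$; and (iii) relax $\mathrm{dp}[i+1,\cdot,\cdot]$ accordingly. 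The base case is the empty assignment at position $0$ with $S=\emptyset$ and empty $\tau$, and the answer is read off from the fully-placed states at $i=n$. Since each of the $O(n\cdot 2^{2k}\cdot k^c)$ states spawns $O(k)$ transitions of constant cost, the total running time is $O(n\cdot 2^{2k}\cdot k^{c+1})$, as claimed.

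I expect the main obstacle to be the correctness argument rather than the counting: one must verify the optimal-substructure property, namely that $\mathrm{dp}[i,S,\tau]$ together with the future placements determines the total objective (this is exactly where $c$-locality is used, to argue that future window costs depend on the past only through $\tau$), and that availability for the future depends on the past only through $S$. The accompanying boundary bookkeeping --- out-of-range frontier indices near positions $1$ and $n$, the short windows at the left end, and the feasibility checks ensuring that each state extends to a genuine bijection respecting $|\sigma(p)-p|\le k$ --- must be handled carefully for the induction to go through.
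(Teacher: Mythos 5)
Your proposal is correct and follows essentially the same approach as the paper: a left-to-right dynamic program whose state consists of the current position, the set of already-placed elements (encoded, as you observe, by the $2k$-element frontier, giving the $2^{2k}$ factor), and a window of $c$ consecutive elements, with $O(k)$ transitions per state yielding the same $O(n \cdot 2^{2k} \cdot k^{c+1})$ bound. The only cosmetic difference is that you store the \emph{last} $c$ placed elements and charge each window cost when its rightmost element is committed, whereas the paper's recurrence keeps the $c$ elements at positions $i+1,\dots,i+c$ and charges the window $cost_I(a_{\sigma(i)},\dots,a_{\sigma(i+c)})$ when choosing $\sigma(i)$ --- an equivalent reindexing.
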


\begin{proof}
    We will use again dynamic programming to find an optimal solution for problem $P$. Let $i$ be any index. Let $I^*(i)$ denote the elements in an optimal order in positions between $1$ and $i$ in $\sigma$, i.e.,
    $I^*(i) = \{ a_{\sigma(1)}, \dots, a_{\sigma(i)} \}.$

    By assumption, we have that
    $I^{-}(i) \subseteq I^*(i) \subseteq I^{+}(i)$, where
    $$
    I^-(i) = \{a_{1},a_{2},\ldots,a_{i-k}\},\quad 
    I^+(i) = \{a_{1},a_{2},\ldots,a_{i+k}\}.
    $$

Thus, selecting the set $I^*(i)$ involves selecting $k$ elements from a list of $2k$ elements. Therefore, the number of different guesses we have to do is bounded by $2^{2k}$.

Let us now define for any $i$ the set $S(i)$ of sets $I'(i)$ of size $i$ such that $I^{-}(i) \subseteq I'(i) \subseteq I^{+}(i)$. We have that $I^*(i) \in S(i)$.

Similarly, we let $I^*(i+1,i+c)$ denote the elements in positions between $i+1,i+c$ in $\sigma$. Here, we have to select each of these $c$ elements among a list of $2k+1$ elements, so the number of different guesses is bounded by $O(k^{c})$ (as $c$ is a constant). Moreover, for each guess we also consider all different permutations of the $c$ elements (denoted by $\sigma(i,i+c)$) which takes constant time (as $c$ is constant).

We also define for any $i$ the set $\mathcal{S}(i+1,i+c)$ of sets $I'(i+1,i+c)$ of size $c$ such that $I^{-}(i+1,i+c) \subseteq I'(i+1,i+c) \subseteq I^{+}(i+1,i+c)$. We have that $I^*(i+1,i+c) \in \mathcal{S}(i+1,i+c)$.

Let us now explain how we solve the problem using dynamic programming which is based on the property of $c$-locality. 
For every entry of the DP table we have the following:
%
%
\begin{align*}
    DP(S(i), a_{\sigma(i+1)}, \dots, a_{\sigma(i+c)})= &\min_{\sigma(i)} \big\{ DP(S(i)\setminus \{\sigma(i)\}, a_{\sigma(i)}, \dots, a_{\sigma(i+c-1)}) \\ &+ 
    cost_I(a_{\sigma(i)},a_{\sigma(i+1)}, \dots,a_{\sigma(i+c)}) \big\}.
\end{align*}

There are at most $2k+1$ choices for $a_{\sigma(i)}$, so each DP entry can be computed in time $O(k)$. Overall, we have that the running time is $O(n \cdot 2^{2k} \cdot k^{c+1})$. 
\end{proof}

\section{About position-enhanced permutation problems}
\label{sec:hardness}

The dynamic programming algorithms of Lemmas~\ref{lem:presort_decomp} and~\ref{lem:presort_local} show that $O(\log n)$-position enhanced versions of decomposable or $c$-local permutation problems are solvable in polynomial time. In this section, we show two complementary results on position-enhanced versions of permutation problems that, though technically quite easy, enlight  interesting limitations to Lemma~\ref{lemma:polytime}:
\begin{itemize}
	\item First we show that there are some permutation problems which remain hard to solve in polynomial time even when being $O(\log n)$-position enhanced.
	\item Second, we show that this bound $O(\log n)$ is tight, meaning that there are decomposable problems and $c$-local problems which remain hard when being $f(n)\log n$-position enhanced, as soon as $f$ is unbounded. 
\end{itemize}

\subsection{A $\log(n)$-position-enhanced hard problem}

Let us consider the following problem called Permutation Clique: we are given a graph $G=(V,E)$ on $n=t^2$ vertices $v_{i,j}$, $i,j=1,\dots,t$. The goal is to determine whether there exists a permutation $\sigma$ over $\{1,\dots,t\}$ such that the $t$ vertices $v_{i,\sigma(i)}$, $i=1,\dots,t$, form a clique in $G$. If one puts vertices in a tabular of size $t\times t$, then we want to find a clique of $t$ vertices with exactly one vertex per row and one per column. This problem is trivially solvable in time $2^{O(t\log t)}$. Interestingly, in \cite{LokshtanovMS18} they showed the following, where ETH stands for Exponential Time Hypothesis:

\begin{proposition}\cite{LokshtanovMS18}
Under ETH, Permutation Clique is not solvable in time $2^{o(t\log t)}$.  
\end{proposition}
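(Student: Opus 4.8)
The plan is to prove a tight ETH lower bound matching the trivial $2^{O(t\log t)}$ brute force (which simply tries all $t!$ permutations $\sigma$ and tests the clique condition in polynomial time). I would do this through a polynomial-time reduction from $3$-Coloring that turns an $N$-vertex instance $H$ into a Permutation Clique instance on a grid of side $t=\Theta(N/\log N)$. Since ETH together with the Sparsification Lemma rules out a $2^{o(N)}$ algorithm for $3$-Coloring on graphs with $N$ vertices and $O(N)$ edges, and since $t\log t=\Theta(N)$ for this choice of $t$, any $2^{o(t\log t)}$ algorithm for Permutation Clique would decide $3$-Coloring in $2^{o(N)}$ time, a contradiction. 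I would start from $3$-Coloring rather than $3$-SAT precisely because its constraints are binary (one inequality per edge), which matches the inherently pairwise nature of the clique/edge constraints in the grid.

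The combinatorial heart is a \emph{grouping} (packing) trick that manufactures the extra logarithmic factor in the exponent. Partition $V(H)$ into $t$ groups $U_1,\dots,U_t$ of size $b=N/t$, and take $t$ minimal with $3^{b}\le t$; solving $(N/t)\ln 3\le \ln t$ gives $b=\Theta(\log N)$ and $t=\Theta(N/\log N)$. On the grid $[t]\times[t]$, the non-isolated vertices of row $i$ correspond to the (at most $3^b\le t$) proper $3$-colorings of $H[U_i]$, placed in distinct columns, while the remaining cells receive isolated dummy vertices. I would connect $(i,a)$ to $(j,a')$ for $i\neq j$ exactly when the colorings encoded by $a$ and $a'$ agree on every $H$-edge between $U_i$ and $U_j$. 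A set containing one vertex per row is then a clique iff the chosen per-group colorings are pairwise consistent, i.e.\ iff $H$ is properly $3$-colorable (dummies are isolated, so they can never enter such a clique). This already yields the $2^{o(t\log t)}$ bound for the weaker ``one vertex per row'' variant, the decisive accounting being that each of the $t$ rows carries $\Theta(\log N)$ bits of the coloring, so the $t$ rows jointly encode all $\Theta(N)$ bits.

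The step I expect to be the main obstacle is enforcing the \emph{permutation} constraint, i.e.\ that the selected cells also occupy pairwise distinct columns, without inflating the side $t$ beyond a constant factor. The encoding above is fundamentally at odds with column-distinctness: a column index stands for a coloring pattern, and two distinct groups may legitimately receive the same pattern in a valid coloring, so a genuine solution could be forced to reuse a column. The obvious repairs do not preserve the bound: giving each row its own disjoint block of columns and completing to a permutation via extra ``dummy rows'' makes the side grow to $\Theta(N)$, so $t\log t$ becomes $\Theta(N\log N)$ and one only contradicts a $2^{o(N\log N)}$ algorithm, which ETH does not forbid. Overcoming this requires a denser gadget, as in \cite{LokshtanovMS18}, that forces the chosen columns to be distinct while still packing $\Theta(\log t)$ usable bits into each row (morally, exploiting that a permutation of $[t]$ already carries $\log(t!)=\Theta(t\log t)$ bits of entropy); with such a gadget the side remains $\Theta(N/\log N)$, the identity $t\log t=\Theta(N)$ is preserved, and the reduction closes the contradiction with ETH.
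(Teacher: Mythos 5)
First, note that the paper does not prove this proposition at all: it is imported verbatim from \cite{LokshtanovMS18}, so there is no internal proof to match your attempt against; the relevant comparison is with the argument in that reference.

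Your proposal has a genuine gap, and you have located it yourself: everything you actually prove concerns the weaker ``one vertex per row'' variant ($t\times t$ Clique in the terminology of \cite{LokshtanovMS18}), while the statement is about \emph{Permutation} Clique, where the chosen cells must also occupy distinct columns. Your reduction from $3$-Coloring with groups of size $b=\Theta(\log N)$, the choice of $t$ minimal with $3^{b}\le t$, and the accounting $t\log t=\Theta(N)$ are correct and are exactly how \cite{LokshtanovMS18} establishes the $2^{o(t\log t)}$ bound for the row-only variant. But your encoding genuinely conflicts with column-distinctness (two groups may legitimately use the same coloring pattern), and you resolve this by writing that one needs ``a denser gadget, as in \cite{LokshtanovMS18}'' --- which is circular, since that gadget is precisely the content of the proposition you were asked to prove. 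So the proof is incomplete as written.

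For the record, the missing step in \cite{LokshtanovMS18} is not a denser row gadget but a separate reduction from $t\times t$ Clique to $t\times t$ Permutation Clique: relabel the columns within each row $i$ by an independent uniform permutation $\pi_i$. Edges are preserved, so any permutation clique in the relabeled instance pulls back to a one-per-row clique in the original; conversely, if the original has a one-per-row clique using columns $c_1,\dots,c_t$ (repetitions allowed), the relabeled columns $\pi_1(c_1),\dots,\pi_t(c_t)$ are independent and uniform, hence pairwise distinct with probability $t!/t^t=2^{-O(t)}$. Repeating the reduction $2^{O(t)}$ times (and derandomizing via an explicit function family, as done there) is affordable against a $2^{o(t\log t)}$ lower bound, since $2^{O(t)}=2^{o(t\log t)}$. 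Your instinct that a permutation of $[t]$ carries $\Theta(t\log t)$ bits of entropy is the right moral reading, but without this column-relabeling argument (or an equivalent), the contradiction with ETH for Permutation Clique does not go through.
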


We show that this problem remains hard even when being $O(\log t)$-position enhanced. Formally, the problem is defined as follows. We are given an instance $G=(V,E)$ of Permutation Clique on $t^2$ vertices, and we want to  determine if there is a permutation $\sigma$ over $\{1,\dots,t\}$ such that:
\begin{itemize}
	\item[(1)] $|\sigma(i)-i|\leq  c \log t$ for all $i=1,\dots,t$;
	\item [(2)]  $\{v_{i,\sigma(i)}:i=1,\dots,t\} $ is a clique in $G$.
\end{itemize}

\begin{proposition}
Under ETH, $(c\log t)$-Positioned-Enhanced Permutation Clique is not solvable in polynomial time (for any $c>0$).  
\end{proposition}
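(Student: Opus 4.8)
The plan is to prove hardness by a \emph{padding reduction} from the unconstrained Permutation Clique problem, whose $2^{o(t\log t)}$ lower bound under ETH is given by the preceding proposition. The key observation is that the position constraint $|\sigma(i)-i|\le c\log T$ becomes vacuous once the grid is large enough relative to the part that actually carries information: if I blow up a size-$t$ instance into a size-$T$ instance with $T=2^{\lceil t/c\rceil}$, then $c\log T\ge t$, so \emph{every} permutation of the original $t$ coordinates trivially satisfies the displacement bound.

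Concretely, given a Permutation Clique instance $G$ on $t^2$ vertices $v_{i,j}$, I would build an instance $G'$ on $T^2$ vertices $w_{i,j}$ (with $i,j\in\{1,\dots,T\}$ and $T=2^{\lceil t/c\rceil}$) as follows. On the top-left block ($i,j\le t$) I place a faithful copy of $G$, i.e.\ $w_{i,j}w_{i',j'}\in E(G')$ iff $v_{i,j}v_{i',j'}\in E(G)$. For each padding row $i>t$ I keep only the diagonal vertex $w_{i,i}$ ``active'', making it adjacent to every block vertex $w_{i',j'}$ ($i',j'\le t$) and to every other diagonal padding vertex $w_{i'',i''}$ ($i''>t$); every remaining padding vertex ($w_{i,j}$ with $i>t, j\neq i$, or $i\le t, j>t$) is declared isolated. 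Since the target clique must have size $T\ge 2$, no isolated vertex can be selected, which forces each padding row $i>t$ to choose its diagonal column $i$ and each original row $i\le t$ to choose a column in $\{1,\dots,t\}$.

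I would then verify both directions of correctness. A valid clique of $G'$ selects $w_{i,i}$ for all $i>t$ and $w_{i,\sigma(i)}$ for $i\le t$, where $\sigma$ is forced to be a permutation of $\{1,\dots,t\}$; all block--diagonal and diagonal--diagonal pairs are adjacent by construction, so the clique condition reduces exactly to $\{v_{i,\sigma(i)}\}$ being a clique in $G$, and conversely any such $\sigma$ lifts to a clique in $G'$. Crucially, the resulting permutation $\tau$ of $\{1,\dots,T\}$ satisfies $\tau(i)=i$ for $i>t$ and $|\tau(i)-i|\le t-1<c\log T$ for $i\le t$, so the displacement constraint is automatically met. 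Hence deciding $(c\log T)$-position-enhanced Permutation Clique on $G'$ is equivalent to deciding plain Permutation Clique on $G$.

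Finally I would derive the contradiction. A polynomial-time algorithm for the position-enhanced problem would run in time $(T^2)^{O(1)}=2^{O(t/c)}=2^{O(t)}$ on $G'$ (with $c$ a fixed constant), and building $G'$ itself costs only $2^{O(t)}$, so we would decide Permutation Clique on $G$ in time $2^{O(t)}=2^{o(t\log t)}$, contradicting the proposition under ETH. The only real obstacle is the gadget design, which must \emph{simultaneously} (i) force the padding rows onto the diagonal, (ii) confine the informative rows to the top-left block, and (iii) guarantee that all cross pairs between the block and the diagonal are edges so the lifted solution is genuinely a clique; the isolated-vertex-plus-complete-join construction above achieves all three at once.
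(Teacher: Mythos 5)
Your proposal is correct and follows essentially the same route as the paper's proof: the identical padding construction (a copy of $G$ in the top-left $t\times t$ block, mutually adjacent diagonal vertices $w_{\ell,\ell}$ for $\ell>t$ joined completely to the block, all other padding vertices edgeless), the same forcing argument that any size-$T$ permutation clique must use the diagonal padding and hence has displacement at most $t\le c\log T$, and the same time accounting $2^{O(t)}=2^{o(t\log t)}$ contradicting the ETH lower bound. Your write-up is merely a bit more explicit than the paper's (e.g.\ spelling out why isolated vertices cannot be selected and using the ceiling in $T=2^{\lceil t/c\rceil}$), but there is no substantive difference.
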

\begin{proof}
	Let us consider an instance $G=(V,E)$ of Permutation Clique, on $n=t^2$ vertices $\{v_{i,j}:i,j=1,\dots,t\}$. Let $t'=2^{t/c}$. We build from $G$ a graph $G'=(V',E')$ on $n'=t'^2$ vertices $\{v'_{i,j}:i,j=1,\dots,t'\}$ where:
\begin{itemize}
	\item For $i_1,i_2,j_1,j_2\leq t$: $v'_{i_1,j_1}$ is adjacent to $v'_{i_2,j_2}$ if and only if $v_{i_1,j_1}$ is adjacent to $v_{i_2,j_2}$ in $G$;
	\item For $i,j\leq t$ and $\ell>t$: $v'_{i,j}$ is adjacent to $v_{\ell,\ell}$;
	\item For $\ell_1,\ell_2>t$: $v_{\ell_1,\ell_1}$ is adjacent to $v_{\ell_2,\ell_2}$.  
\end{itemize}
By construction, a (non trivial) clique in $G'$ is composed by a clique in $G$ plus `diagonal' vertices  $v_{\ell,\ell}$. Hence, there is a permutation clique of size $t'$ in $G'$ if and only if there is a permutation clique of size $t$ in $G$. Moreover, we know by construction that if there is a permutation clique induced by $\sigma'$ in $G'$, then for $\ell>t$ $\sigma(\ell)=\ell$ (diagonal vertices), and for $\ell\leq t$ $\sigma(\ell)\leq t$. Then in any case, $|\sigma(\ell)-\ell|\leq t = c \log t'$.

Suppose that we answer to $(c\log t')$-Positioned-Enhanced Permutation Clique in polynomial time $O(n'^c)$. The construction of $G'$ takes time $O(n')$, so we can solve Permutation Clique in time $O(n'^{c})=O(t'^{c'})=2^{O(t)}$. This is in contradiction with ETH. 
\end{proof}	

\subsection{Hardness results for worse than $O(\log n)$-position-enhancement}

Let us now go back to decomposable and $c$-local problems, and show that the $O(\log n)$-position enhancement is necessary for some problems.


\begin{proposition}\label{prop:hardloc}
	For any unbounded increasing function $f$, $f(n)\log n$-Positioned-Enhanced Max Acyclic Subgraph is not solvable in polynomial time under ETH.
\end{proposition}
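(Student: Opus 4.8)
The plan is to reduce the (unconstrained) Max Acyclic Subgraph problem on $n$ vertices to its $f(N)\log N$-position-enhanced version on $N$ vertices, for a carefully chosen $N=N(n)$, and then invoke the fact that, under ETH, Max Acyclic Subgraph on $n$ vertices admits no $2^{o(n)}$-time algorithm (this follows from the linear-size NP-hardness reductions for Feedback Arc Set together with the Sparsification Lemma). As in the proof of the previous proposition, the idea is a padding argument: I enlarge the instance with ``inert'' vertices so that the position window $f(N)\log N$ becomes wide enough to leave the hard core completely unconstrained, while keeping $N$ small enough that a polynomial-time algorithm on $N$ vertices runs in subexponential time in $n$.

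First I would fix the size $N$. Writing $s=\log N$ and $g(s)=f(2^{s})$, the map $s\mapsto s\,g(s)$ is increasing and unbounded (as $f$, hence $g$, is unbounded and increasing), so the least $s^{*}=s^{*}(n)$ with $s^{*}g(s^{*})\ge n$ is well defined and tends to infinity. Moreover $s^{*}(n)=o(n)$: for any constant $C$ and all large $n$ we have $g(\lceil n/C\rceil)\ge C$, hence $\lceil n/C\rceil\, g(\lceil n/C\rceil)\ge n$ and so $s^{*}(n)\le n/C$. Setting $N=2^{s^{*}(n)}$ then gives $\log N=o(n)$ and $f(N)\log N\ge n$. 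We may further assume $f(N)\log N<N$, since otherwise the position constraint is vacuous and hardness follows immediately from the NP-hardness of plain Max Acyclic Subgraph; in particular $n\le f(N)\log N<N$, leaving room for $N-n$ padding vertices.

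Next I would build the padded digraph $G'$ on vertices $a_1,\dots,a_N$. The last $n$ vertices $C=\{a_{N-n+1},\dots,a_N\}$ carry the input instance $G$; the first $N-n$ vertices $P=\{a_1,\dots,a_{N-n}\}$ are padding. I add a transitive tournament on $P$ (an arc $a_p\to a_q$ for all $p<q\le N-n$) and an arc $a_p\to a_c$ from every padding vertex to every core vertex. For any permutation, the number of well-ordered $P$-internal arcs is at most $\binom{N-n}{2}$, with equality iff $P$ keeps its natural order; the number of well-ordered $P\to C$ arcs is at most $(N-n)n$, with equality iff all of $P$ precedes all of $C$; and the number of well-ordered core arcs is at most $\mathrm{OPT}(G)$. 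All three bounds are attained simultaneously by placing $P$ in natural order in positions $1,\dots,N-n$ and placing $C$, in a $G$-optimal order, in positions $N-n+1,\dots,N$. Hence $\mathrm{OPT}(G')=\binom{N-n}{2}+(N-n)n+\mathrm{OPT}(G)$, and this optimum is attained by a permutation in which every padding vertex keeps its position and every core vertex stays inside the block $[N-n+1,N]$, so its displacement is at most $n-1<f(N)\log N$. Thus the promise of the $f(N)\log N$-position-enhanced problem holds, so the position-constrained optimum coincides with the global optimum.

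Consequently, a polynomial-time algorithm for $f(N)\log N$-position-enhanced Max Acyclic Subgraph would compute $\mathrm{OPT}(G')$, and hence $\mathrm{OPT}(G)=\mathrm{OPT}(G')-\binom{N-n}{2}-(N-n)n$, in time $\mathrm{poly}(N)=2^{O(\log N)}=2^{o(n)}$ (the construction of $G'$ also costs only $\mathrm{poly}(N)$), contradicting ETH. The main obstacle, and the only genuinely delicate point, is the choice of $N$ in the first step: because $f$ may grow arbitrarily slowly, one must verify that $N$ can be taken simultaneously large enough for the window $f(N)\log N$ to swallow the whole core (at least $n$) and small enough ($\log N=o(n)$) that polynomial time in $N$ remains subexponential in $n$. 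Once $N$ is fixed, the padding construction and the value computation are routine.
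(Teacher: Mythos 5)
Your proposal is correct and follows essentially the same padding argument as the paper: the paper simply adds isolated dummy vertices (so the optimum is unchanged and no offset bookkeeping is needed) and sets $N=2^{n/f(n)}$ directly, whereas you force the padding into place with a transitive tournament plus padding-to-core arcs and calibrate $N$ via the least $s$ with $s\, f(2^s)\ge n$ --- a more careful but equivalent calibration. Since the position-enhanced problem only promises the \emph{existence} of an optimal solution close to the identity, your extra arcs and the offset computation are unnecessary (isolated dummies already admit an optimal ordering with the core in a block and the dummies fixed), but they do no harm.
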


\begin{proof}
	We make a reduction for Max Acyclic Subgraph, which is not solvable in time $2^{o(n)}$ under ETH (from the linear reduction from vertex cover \cite{Kar72} and the hardness of vertex cover~\cite{ImpagliazzoPZ01}). Let $G=(V,E)$ be a directed graph. We build a graph $G'$ by adding to $G$ dummy vertices so that $G'$ has $N=2^{n/f(n)}$ vertices. We keep the arcs that are in $G$ but do not add any new arcs. Then the order in which we put dummy vertices does not change the objective function. So we know that there is an optimal solution such that $\sigma(i)=i$ for $i>n$ and $\sigma(i)\leq n$ for $i\leq n$, and there $\sigma$ restricted to $\{1,\dots,n\}$ is an optimal solution for $G$.
So we know that $|\sigma(i)-i|\leq n=f(n)\log N\leq f(N)\log N$.

Now, suppose that we can solve $f(n)\log n$-Positioned-Enhanced Max Acyclic Subgraph in polynomial time $O(N^c)$. Then we would solve Max Acyclic Subgraph in time $O(2^{cn/f(n)})=2^{o(n)}$ as $f$ is unbounded. This is impossible under ETH. 	
\end{proof}

Similar proofs can be easily derived for Min Linear Arrangement (same proof) and the single-machine sum of completion time problem considered in Section~\ref{sec:dec} (just add dummy jobs of processing time 0).

\begin{proposition}\label{prop:tsphard}
	For any unbounded increasing function $f$, $f(n)\log n$-Positioned-Enhanced TSP is not solvable in polynomial time under ETH.
\end{proposition}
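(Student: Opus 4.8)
The plan is to follow the template of Proposition~\ref{prop:hardloc}, but the cyclic nature of TSP means that dummy cities cannot be made cost-neutral the way dummy vertices of Max Acyclic Subgraph were (every city contributes two tour edges), so the neutral padding of the acyclic case must be replaced by a distance gadget that forces the dummies into fixed positions at zero extra cost. I start from the standard fact that TSP is not solvable in $2^{o(n)}$ under ETH (via the linear reduction from Hamiltonian Cycle, which is itself $2^{\Omega(n)}$-hard under ETH by sparsification, \cite{ImpagliazzoPZ01}). Given an $n$-city instance with distance $d$ on $v_1,\dots,v_n$, I set $N = 2^{n/f(n)}$ and add dummy cities $w_1,\dots,w_{N-n}$, relabelled $v_{n+\ell} := w_\ell$.

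For the gadget, fix $M$ larger than the cost of any tour on the real cities (e.g.\ $M = 1 + n\max_{i,j} d(v_i,v_j)$) and define the new distance $d'$ to agree with $d$ on real pairs, to equal $0$ between consecutive dummies ($d'(w_\ell,w_{\ell+1})=0$) and $M$ between non-consecutive dummies, and to equal $0$ between a real city and either chain endpoint $w_1$ or $w_{N-n}$ while being $M$ between a real city and any interior dummy. I would then argue that, since every interior dummy $w_\ell$ has only the two zero-cost incident edges $w_{\ell-1}w_\ell$ and $w_\ell w_{\ell+1}$, any tour avoiding an $M$-edge must contain the whole chain $w_1 - \dots - w_{N-n}$ as a subpath whose two endpoints are joined to real cities; consequently the real cities occupy one contiguous arc of the cycle, the dummy and junction contribution is exactly $0$, and minimizing $d'$ over tours of $G'$ is the same as minimizing the cost of a Hamiltonian path over the real cities.

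Next I realize the position bound. By rotating and, if needed, reflecting the cyclic tour I place the real cities in positions $1,\dots,n$ and each $w_\ell$ in position $n+\ell$, so the corresponding permutation $\sigma$ satisfies $|\sigma(i)-i|\le n$; since $\log N = n/f(n)$ and $f$ is increasing, $n = f(n)\log N \le f(N)\log N$, so the $f(N)\log N$-position-enhancement promise genuinely holds on $G'$. Building $G'$ costs $O(N^2)$, hence a polynomial-time $O(N^c)$ algorithm for the enhanced problem would solve the real instance in time $O(N^c) = 2^{O(cn/f(n))} = 2^{o(n)}$ (as $f$ is unbounded), contradicting ETH.

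The main obstacle is precisely the gadget design above: forcing the dummies into a fixed zero-cost block is what replaces the automatic neutrality available in Proposition~\ref{prop:hardloc}, and the point to verify carefully is that no optimal tour can break the chain or attach an interior dummy to a real city, which follows from $M$ exceeding every real-tour cost. A secondary subtlety is that the real cities of the large cycle form a Hamiltonian \emph{path} rather than a cycle, so the reduction is cleanest when phrased as starting from the path version of TSP; this is harmless, since the path and cyclic versions are inter-reducible with $O(1)$ additional vertices (for instance through a single apex city at distance $0$ from all others) and both are $2^{\Omega(n)}$-hard under ETH.
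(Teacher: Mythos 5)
Your proof is correct, but it takes a more elaborate route than the paper's. The paper keeps the padding cost-neutral in exactly the way you claim is unavailable: it adds dummy vertices that are all at distance $0$ from a designated start vertex $v$ and at distance $0$ from each other, so each dummy behaves as a clone of $v$. An optimal tour of the padded instance is then an optimal tour of the original instance with the dummy block inserted, at zero marginal cost, between the last real city and the return to $v$ (the edge from the last real city into the block plays the role of the closing edge $d(u,v)$), which yields an optimal solution with $v$ first and the dummies ranked last, hence the $f(N)\log N$-enhancement ``for free'' and the same ETH accounting as in Proposition~\ref{prop:hardloc} --- a direct self-reduction from cyclic TSP with no big-$M$ and no detour through the path version. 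Your zero-cost chain with $M$-penalties achieves the same contiguity of the dummy block, but forces the real cities into a Hamiltonian \emph{path}, which obliges you to invoke the (standard, and correctly sketched) ETH-hardness of path-TSP or the path/cycle interreducibility via apex vertices --- an extra step the paper's gadget avoids. In fairness, your version is more fully specified: the paper leaves the dummy-to-real distances implicit (the intended reading is $d'(u,w)=d(u,v)$ for every real $u$, as forced by the triangle inequality in the metric case), whereas your construction pins down every distance and makes the structure of optimal tours verifiable by inspection; the price is the slightly longer chain of reductions. Both arguments establish the claimed bound, and your position-enhancement and running-time bookkeeping ($|\sigma(i)-i|\le n=f(n)\log N\le f(N)\log N$, construction in $O(N^2)$, contradiction via $O(N^c)=2^{O(cn/f(n))}=2^{o(n)}$) matches the paper's exactly.
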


\begin{proof}
The proof is similar to the one of Proposition~\ref{prop:hardloc}. This time, we add dummy vertices that are all at distance 0 from a given initial vertex $v$ (and all distances between them are 0), so that there are $N=2^{n/f(n)}$ vertices in total. Then we know that there exists an optimal solution in which these vertices are ranked last in the permutation (the solution starts with $v$), so we get as previously $f(N)\log N$-position enhancement ``for free''. We get the same conclusion, using the fact that TSP is not solvable in time $2^{o(n)}$ under ETH~\cite{ImpagliazzoPZ01}.
\end{proof}

\bibliography{bibliography}

\end{document}